%

\documentclass[journal]{IEEEtran}

\usepackage{amsmath, amssymb}
\usepackage{amsfonts}
\usepackage{nicefrac}
\usepackage{cite}
\usepackage{epsfig}
\usepackage{subfigure}

\newtheorem{lemma}{Lemma}[section]

\newtheorem{proposition}{Proposition}[section]

\newtheorem{example}{Example}[section]


\title{Stacked OSTBC: Error Performance and Rate Analysis }

\author{Aydin Sezgin,~\IEEEmembership{Member,~IEEE}
        and Oliver Henkel,~\IEEEmembership{Member,~IEEE}
\thanks{Manuscript received March 29, 2006; revised July 10, 2006. This paper was presented in part at the VTC 2004-Fall,
Los Angeles, September 2004.
        The associate editor coordinating the review of this
manuscript and approving it for publication was Dr. R. Michael Buehrer.}
\thanks{A.Sezgin has been with the Fraunhofer-Institute for Telecommunications, Heinrich-Hertz-Institut, Einsteinufer 37,
D-10587 Berlin, Germany. He is now with the Information Systems Laboratory, Stanford University, CA 94305-9510, USA
(e-mail:sezgin@stanford.edu).}
\thanks{O.Henkel is with the Fraunhofer-Institute for Telecomm., HHI, Einsteinufer 37, 10587
Berlin, Germany, (e-mail:henkel@hhi.de).}} \markboth{IEEE Transactions on Signal Processing,~Vol.~xx,
No.~xx,~month~year}{Sezgin, and Henkel: Stacked OSTBC: Error Performance and Rate Analysis}

\begin{document}
\maketitle

\begin{abstract}
It is well known, that the Alamouti scheme is the only space-time code from orthogonal design achieving the capacity of
a multiple-input multiple-output (MIMO) wireless communication system with $n_T=2$ transmit antennas and $n_R=1$
receive antenna. In this work, we propose the $n$-times stacked Alamouti scheme for $n_T=2n$ transmit antennas and show
that this scheme achieves the capacity in the case of $n_R=1$ receive antenna. This result may regarded as an extension
of the Alamouti case. For the more general case of more than one receive antenna, we show that if the number of
transmit antennas is higher than the number of receive antennas we achieve a high portion of the capacity with this
scheme. Further, we show that the MIMO capacity is at most twice the rate achieved with the proposed scheme for all
SNR. We derive lower and upper bounds for the rate achieved with this scheme and compare it with upper and lower bounds
for the capacity. In addition to the capacity analysis based on the assumption of a coherent channel, we analyze the
error rate performance of the stacked OSTBC with the optimal ML detector and with the suboptimal lattice-reduction (LR)
aided zero-forcing detector. We compare the error rate performance of the stacked OSTBC with spatial multiplexing (SM)
and full-diversity achieving schemes. Finally, we illustrate the theoretical results by numerical simulations.
\end{abstract}

\section{Introduction}

Recent information theoretic results have demonstrated that the ability of a system to support a high link quality and
higher data rates in the presence of Rayleigh fading improves significantly with the use of multiple transmit and
receive antennas~\cite{Telatar99,FoschiniGans98}. Since then there has been considerable work on a variety of
schemes~\cite{TirkHotWichBuch} which exploit multiple antennas at both the transmitter and receiver in order to either
obtain transmit and receive diversity and therefore increase the reliability of the system, e.g., orthogonal space-time
block codes (OSTBC) and space-time trellis codes~\cite{TarokhSeCa98,Alamouti,TarokhJafarkCalder99} or achieve the
theoretical bounds~\cite{Foschini96} derived in~\cite{Telatar99,FoschiniGans98}. Interested readers are referred
to~\cite{TirkHotWichBuch}, where a detailed analysis of different schemes is given.

The performance of OSTBC with respect to mutual information has been analyzed (among others)
in~\cite{HassibiHoch2002,Paulraj,Sandhu,Bauch} and it was shown that the capacity is achieved only in the case of
$n_T=2$ transmit, the well known Alamouti scheme~\cite{Alamouti}, and $n_R=1$ receive antennas due to the rate loss
inherent in OSTBC with higher number of transmit antennas. Recently, it was shown in~\cite{SezgJorICASSP05} that due to
this rate loss, OSTBC with odd number of antennas are always outperformed by OSTBC with even number of antennas,
restricting even more the deployment of OSTBC. On the one hand, we have the OSTBC with low complexity and low rates. On
the other hand, we have the space-time trellis codes, which achieve higher spectral efficiency in addition to high
performance with respect to frame error rates. However, the decoding complexity of space-time trellis codes is
increasing exponentially with the number of transmit antennas and the transmission rate. In order to achieve higher
spectral efficiency combined with low complexity maximum likelihood detectors,
\cite{Jafarkhani01,PapadiasFoschi01,TirkkonenHottinenNON,A.Sezgin2004,SezginInfoTheory} designed quasi-orthogonal
space-time block codes (QSTBC) with transmission rate one for more than two transmit antennas.

Other approaches aimed at reducing the decoding complexity of space-time trellis codes. For instance, a layered
space-time architecture was proposed in~\cite{TarokhNagSesCa99}, where the transmit antennas were partitioned into
two-antenna groups and on each group space-time trellis codes were used as component codes. In order to further
decrease the complexity of this layered space-time architecture,~\cite{A.F.Naguib1998,N.Prasad2001,N.Prasad2003} used
the Alamouti scheme as component code for each group in combination with a suboptimal successive group interference
suppression detection strategy. The outage probability of this scheme was analyzed in~\cite{SezginJorsCapHighRate} for
$n_T>n_R$ and an upper bound was derived. An asymptotic analysis of the rate achievable with this scheme is performed
in~\cite{LiYeCapStackedAlamou}. For $n=2$, this transmission scheme is also referred to as double-space-time transmit
diversity (DSTTD) and was proposed as one possible candidate for high speed downlink packet access (HSDPA) in 3GPP and
beyond~\cite{TexasInstrum}.

It is obvious that reducing the computational complexity of the detector without sacrificing much performance is an
important issue. There is a huge amount of suboptimal detectors with low complexity in the literature, linear detectors
like zero-forcing (ZF) or minimum mean square error (MMSE) and nonlinear detectors like e.g.
VBLAST~\cite{WolnianskyFoschiniGoldenValen98}. Unfortunately, these detectors significantly sacrifice performance in
terms of bit-error-rate (BER). Recently, lattice reduction (LR) aided detection in combination with suboptimal
detectors was proposed by Yao and Wornell in order to improve the performance of multi antenna
systems~\cite{YaoWornell}. The lattice reduction algorithm proposed in~\cite{YaoWornell} is optimal, but works only for
MIMO systems with two transmit and two receive antennas. The impact of receive antenna correlation on the performance
of LR-aided detection was analyzed in~\cite{MengPanYouKimLLLforVLBAST}. In~\cite{WindpassingerLLL}, the work
of~\cite{YaoWornell} was extended to systems with more transmit and receive antennas, using the sub-optimal
LLL~\cite{LLL} lattice reduction algorithm. In~\cite{WuebbenLLL}, the LR-aided schemes in~\cite{WindpassingerLLL} were
adopted to the MMSE criterion. Note that the error rate curves of all these LR detectors are parallel to those for
maximum likelihood (ML) detection with only some penalty in power efficiency.

In this work, we show that the stacked Alamouti scheme is capable to achieve the capacity in combination with the
optimal maximum likelihood detector for the case of $n_T=2n$ transmit antennas and $n_R=1$ receive antennas. This was
also shown for the basic Alamouti scheme with $n_T=2$ and $n_R=1$~\cite{HassibiHoch2002}. Our result may therefore be
regarded as an extension of the Alamouti scheme to $n_T>2$. Furthermore, we show that in the case of more than one
receive antenna and if $n_T>n_R$ the stacked Alamouti scheme is capable to achieve a significant portion of the
capacity and approaches the capacity if $n_T\gg n_R$. For any $n_T$, $n_R$, we show that the MIMO capacity is at most
twice the rate achieved with the proposed scheme for all SNR.
However, achieving high portions of the capacity does not guarantee good performance in terms of error probability.
 Thus, we compare the error-rate performance of the proposed scheme with spatial multiplexing (SM), a rate oriented
space-time transmission schemes which achieve a high portion of the capacity of MIMO systems, and with the
aforementioned diversity-oriented QSTBC by deploying LR-aided linear ZF and ML detectors at the receiver, respectively.

The remainder of this paper is organized as follows. In Section~\ref{seq:system_model}, we introduce the system model
and establish the notation. The structure of the stacked Alamouti scheme and the equivalent channel model are shown in
section~\ref{sec:Diversity_analy}. The analysis of the mutual information is presented in section~\ref{sec:MutInfo}.
LR-aided linear ZF detection is shortly described in section ~\ref{sec:Schemes} including the analysis of the
probability density function of the condition number of the equivalent channel generated by the different transmission
schemes (SM,QSTBC, and stacked OSTBC). Section~\ref{seq:simulations} provides simulation results, followed by some
concluding remarks in Section~\ref{sec:Conclusion}.

\section{System model}\label{seq:system_model}
 We consider a system with $n_T$ transmit and $n_R$ receive antennas. Our system model is defined
by
\begin{equation}\label{eq:System}
    \mathbf{Y} = \mathbf{G}_{n_T}\mathbf{H}^T+  \mathbf{N}\;,
\end{equation}
where $\mathbf{G}_{n_T}$ is the ($T \times n_T $) transmit matrix, $\mathbf{Y}=[\mathbf{y}_1,\dots,\mathbf{y}_{n_R}]$
is the ($T \times n_R$) receive matrix, $\mathbf{H}=[\mathbf{h}_1,\dots,\mathbf{h}_{n_T}]$ is a ($n_R \times n_T $)
matrix characterizing the coherent channel, and $\mathbf{N}=[\mathbf{n}_1,\dots,\mathbf{n}_{n_R}]$ is the complex ($T
\times n_R$) white Gaussian noise (AWGN) matrix, where an entry $\{n_{ti}\}$ of $\mathbf{N}$ ($1\leq i \leq n_R$)
denotes the complex noise  at the $i$th receiver for a given time $t\,(1\leq t \leq T)$. The real and imaginary parts
of $n_{ti}$ are independent and $\mathcal{N}$(0,$n_T/(2\mathrm{SNR})$) distributed. An entry of the channel matrix is
denoted by \{$h_{ij}$\}. This represents the complex gain of the channel between the $j$th transmit ($1\leq j \leq
n_T$) and the $i$th receive ($1\leq i \leq n_R$) antenna, where the real and imaginary parts of the channel gains are
independent and normal distributed random variables with $\mathcal{N}$(0,1/2) per dimension. The channel matrix is
assumed to be constant for a block of $T$ symbols and changes independently from block to block.
The average power of
the symbols transmitted from each antenna is normalized to be $\nicefrac{1}{n_T}$, so that the average power of the
received signal at each receive antenna is one and the signal-to-noise ratio (SNR) is $\rho$. It is further assumed
that the transmitter has no channel state information (CSI) and the receiver has perfect CSI.
\section{Code construction}\label{sec:Diversity_analy}
A space time block code is defined by its transmit matrix $\mathbf{G}_{n_T}$ with entries $\{x_j\}_{j=1}^{p}$, which
are elements of the vector $\mathbf{x}=[x_1,\dots,x_p]^T$ with $x_1,\dots,x_{p} \in \mathcal{C}$, where $\mathcal{C}
\subseteq \mathbb{C}$ denotes a complex modulation signal set with unit average power, e.g. $M$-PSK.. The rate $R$ of a
space-time code is defined as $R=p/T$. In this paper, we focus on the rate $n_T/2$ stacked Alamouti scheme. Starting
with the well known (basic) Alamouti scheme~\cite{Alamouti} for $n_T=2$ transmit antennas
\begin{equation}\nonumber
 \mathbf{G}_{2}(x_1,x_2) = \left[
\begin{array}{*{2}{r}}
          x_1 & x_2  \\
        x_2^* & -x_1^*  \\
        \end{array}
\right]\;,
\end{equation}
the transmit matrix of the stacked Alamouti scheme with $n_T=2n$ is constructed in the following way
\begin{align}\nonumber
&\mathbf{G}_{n_T}\left(\{x_j\}_{j=1}^{n_T}\right)  \\
& =\left[\mathbf{G}_{2}(x_1,x_2), \mathbf{G}_{2}(x_3,x_4),\dots,\mathbf{G}_{2}(x_{n_T-1},x_{n_T})\right].\nonumber
\end{align}

\begin{example}
For the case of $n=2$, i.e. $n_T=4$ transmit antennas we have
\begin{equation}\nonumber
\mathbf{G}_{4}(\{x_j\}_{j=1}^{4}) = \left[
\begin{array}{*{4}{r}}
          x_1 & x_2 & x_3 & x_4 \\
        x_2^* & -x_1^* & x_4^* & -x_3^* \\
        \end{array}
\right]\;,
\end{equation}
which is also referred to as DSTTD\cite{TexasInstrum}.
\end{example}

After some manipulations (particularly complex-conjugating) the system model in (\ref{eq:System}) can be rewritten as
\begin{equation}\label{eq:system_H_vorne}
    \mathbf{y}'=\mathbf{H}'\mathbf{x}+ \mathbf{n}'\;,
\end{equation}
where $\mathbf{y}'$, $\mathbf{n}'\in \mathbb{C}^{2n_R}$ and $\mathbf{H}' \in \mathbb{C}^{2n_R\times n_T}$. The
equivalent channel equals
\begin{align}\mathbf{H}'=[(\mathbf{H}_{1}')^T,\dots,(\mathbf{H}_{i}')^T,\dots,(\mathbf{H}_{n_R}')^T]^T, \nonumber
\end{align}
 where $\mathbf{H}_i'$ is given as
\begin{equation}\label{eq:neuer_Kanal}
\mathbf{H}_i'=  \left[ \mathbf{H}_{i,1}',\mathbf{H}_{i,3}',\dots,\mathbf{H}_{i,n_T-1}' \right],
\end{equation}
where
\begin{equation}\nonumber
 \mathbf{H}_{i,j}' = \left[
\begin{array}{*{2}{c}}
          h_{ij} & h_{i(j+1)}  \\
        -h_{i(j+1)}^* & h_{ij}^*  \\
        \end{array}
\right]\;.
\end{equation}

\section{Mutual Information}\label{sec:MutInfo} 

The instantaneous capacity $I$ of a MIMO system with $n_T$ transmit and $n_R$ receive antennas is given
as~\cite{Telatar99,FoschiniGans98}
\begin{equation}\label{eq:MutInfoNrGeneral}
I = \log_2\det\left(\mathbf{I}_{n_T} +\frac{\rho}{n_T}\mathbf{H}^H\mathbf{H}\right)\;.
\end{equation}

In the following two subsections, we derive lower and upper bounds for both the ergodic capacity and the average rate
achievable with the proposed stacked scheme in order to yield lower and upper bounds on the ratio of the ergodic
capacity to the average rate of the stacked OSTBC. In the third subsection, we characterize the absolute loss of the
average rate of the stacked OSTBC to the ergodic capacity.

\subsection{Upper bounds on the ergodic capacity and the average rate of stacked OSTBC}\label{sec:MutInfoUB}

By applying the trace-determinant inequality $\det(\mathbf{A})^{1/n}\leq \frac{1}{n}\mathrm{tr}(\mathbf{A})$, we arrive
at a simple upper bound on the instantaneous capacity given as
\begin{equation}\label{eq:UppBoundCapa}
    I \leq I_{ub}= L \log_2
    \bigg(1+\frac{\rho}{n_TL}\underbrace{\sum_{j=1}^{n_T}\sum_{i=1}^{n_R}|h_{ji}|^2}_{\lambda}\bigg)\;,
\end{equation}
where $L$ is equal to $L=\min(n_T,n_R)$. Averaging the upper bound in~\eqref{eq:UppBoundCapa} over all channel
realizations results in~\cite{Gradshteyn} ($C=\mathbb{E}[I]$ denotes ergodic capacity)
\begin{align}\label{eq:UppBoundAnaly}
    C \leq C_{ub} =\mathbb{E}\left[I_{ub}\right] = & \frac{L}{\ln(2)}\sum\limits_{k=0}^{n_Tn_R-1}
    \left(\frac{n_TL}{\rho}\right)^{n_Tn_R-k-1}\\
    & e^{\frac{n_TL}{\rho}} \Gamma\left(1-(n_Tn_R-k),\frac{n_TL}{\rho}\right) \nonumber.
\end{align}
Note that for high SNR, the slope of the upper bound is equal to $L$. In addition to this upper bound, we compare the
rate achieved with the stacked scheme with the following upper bound
\begin{equation}\label{eq:BoundGrant}
    C \leq C_{\mathrm{Jen}} =\log_2\left(\sum_{i=0}^L \binom{L}{i}\frac{K!}{(K-i)!}\left(\frac{\rho}{n_T}\right)^i\right),
\end{equation}
derived in~\cite{GrantUppBound} by using Jensen's inequality, where $K=\max(n_T,n_R)$.

In the following, we analyze the performance of the stacked scheme with respect to mutual information and derive upper
bounds for the average rate of the stacked scheme. We first analyze the case of $n_R=1$ receive antennas and then
generalize the analysis to the case of arbitrary number of receive antennas.

\subsubsection{Case $n_R=1$}

In case of $n_R=1$, the achievable rate of the stacked Alamouti scheme is
\begin{equation}\nonumber
I_{sA} = \frac{1}{2}\log_2\det\left(\mathbf{I}_{n_T} +\frac{\rho}{n_T}(\mathbf{H}_1')^H\mathbf{H}_1'\right)\;.
\end{equation}
Using the determinant equality $\det(\mathbf{I}+\mathbf{AB})=\det(\mathbf{I}+\mathbf{BA})$, after some manipulations we
arrive at
\begin{equation}\label{eq:MutInfoNReq1}
I_{sA} = \log_2\left(1 +\frac{\rho}{n_T} \sum\limits_{j=1}^{n_T}|h_{j1}|^2\right)\;,
\end{equation}
which equals the capacity of a MIMO system with $n_T$ transmit and $n_R=1$ receive antennas~\cite{Telatar99}, i.e. as
long as $n_R=1$, the capacity is achieved for arbitrary $n=n_T/2$. Note that in~\cite[p.199]{TirkHotWichBuch} a Taylor
series expansion is performed for the capacity and the mutual information achievable with certain schemes such as the
stacked OSTBC. After comparing the first two expansion coefficients (the linear term and the second order coefficients)
it is shown that the stacked OSTBC reaches second-order capacity for $n_R=1$, i.e. the second-order coefficient of the
mutual information of the stacked OSTBC is equal to the second-order coefficient of the capacity. Although essential
features of the mutual information can be already seen from the first and second-order coefficients (especially at low
SNR), our result above may regarded as more general, since the exact capacity and mutual information expressions are
analyzed. Further note that the result above may be regarded as an extension of the results in~\cite{HassibiHoch2002}.
There it was shown, that the basic Alamouti scheme with $n_T=2$ and $n_R=1$ achieves the capacity.

\subsubsection{Case of $n_T=4$ and $n_R=2$ (DSTTD)}

In the case of $n_T=4$ transmit and $n_R=2$ receive antennas, the equivalent channel is given by
\begin{equation}\nonumber
   \mathbf{H}'= \left[%
\begin{array}{cccc}
  h_{11} & h_{12} & h_{13} & h_{14} \\
  -h^*_{12} & h^*_{11} &-h^*_{14} & h^*_{13} \\
  h_{21} & h_{22} & h_{23} & h_{24} \\
  -h^*_{22} & h^*_{21} &-h^*_{24} & h^*_{23} \\
\end{array}%
\right].
\end{equation}
The achievable rate in this case is given as
\begin{equation}\nonumber
I_{sA} = \frac{1}{2}\log_2\det\left(\mathbf{I}_{n_T} +\frac{\rho}{n_T}\left[%
\begin{array}{cccc}
  \lambda_1 & 0 & \alpha_1 & \alpha_2 \\
  0 & \lambda_1 & -\alpha_2^* & \alpha_1^* \\
  \alpha_1^* & -\alpha_2 & \lambda_2 & 0 \\
  \alpha_2^* & \alpha_1 & 0 & \lambda_2 \\
\end{array}%
\right]\right),
\end{equation}
where $\lambda_i=\sum_{j=1}^{n_T}|h_{ij}|^2$, $\alpha_1=h_{11}h_{21}^*+h_{12}h_{22}^*+h_{13}h_{23}^*+h_{14}h_{24}^*$,
and $\alpha_2=-h_{11}h_{22}+h_{12}h_{21}-h_{13}h_{24}^*+h_{14}h_{23}$. Using Fischer's inequality
\begin{equation}\nonumber
    \det\left(\left[%
\begin{array}{cc}
  \mathbf{A} & \mathbf{B^H} \\         
  \mathbf{B} & \mathbf{D} \\           
\end{array}%
\right]\right)\leq \det(\mathbf{A})\det(\mathbf{D})
\end{equation}
yields
\begin{equation}\nonumber
I_{sA}\leq \log_2\left(\left(1+\frac{\rho}{n_T}\lambda_1\right)\left(1+\frac{\rho}{n_T}\lambda_2\right)\right).
\end{equation}
By using the arithmetic-geometric inequality, we arrive at
\begin{equation}\nonumber
I_{sA}\leq 2\log_2\left(1+\frac{\rho}{2n_T}||\mathbf{H}||^2\right).
\end{equation}
This upper bound equals to twice the rate of a full code rate OSTBC for $n_T=4$ transmit and $n_R=2$ receive antennas
with a power penalty of $3$~dB.
In this particular case a more precise statement can be made due to the following
strict form of Fischer's inequality~\cite{hen.wun-itg05}
\begin{lemma}\label{lem.strictFischer}
   Let
   $\mathbf{P}=\left[
     \begin{array}{cc}
        \mathbf{A} & \mathbf{B^H} \\
        \mathbf{B} & \mathbf{D} \\
     \end{array}%
   \right]$ ($\mathbf{A},\mathbf{D}$ square, nonempty) be positive
   definite. Then
   \begin{equation*}
      \text{$\mathbf{B}$ has full rank}
      \quad\Rightarrow\quad
     \det\mathbf{P} < (\det \mathbf{A}) (\det \mathbf{D})
   \end{equation*}
\end{lemma}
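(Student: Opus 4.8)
The plan is to prove the strict inequality by exhibiting the precise gap between $\det\mathbf{P}$ and $(\det\mathbf{A})(\det\mathbf{D})$ via a Schur complement factorization. Since $\mathbf{P}$ is positive definite, its leading block $\mathbf{A}$ is positive definite and hence invertible, so I would write the block $LDL^H$-type factorization
\begin{equation}\nonumber
\mathbf{P}=\left[\begin{array}{cc} \mathbf{I} & \mathbf{0} \\ \mathbf{B}\mathbf{A}^{-1} & \mathbf{I}\end{array}\right]\left[\begin{array}{cc} \mathbf{A} & \mathbf{0} \\ \mathbf{0} & \mathbf{D}-\mathbf{B}\mathbf{A}^{-1}\mathbf{B}^H\end{array}\right]\left[\begin{array}{cc} \mathbf{I} & \mathbf{A}^{-1}\mathbf{B}^H \\ \mathbf{0} & \mathbf{I}\end{array}\right].
\end{equation}
Taking determinants, the triangular factors contribute $1$, giving the exact identity $\det\mathbf{P}=(\det\mathbf{A})\det(\mathbf{D}-\mathbf{B}\mathbf{A}^{-1}\mathbf{B}^H)$. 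This reduces the whole claim to comparing $\det(\mathbf{D}-\mathbf{B}\mathbf{A}^{-1}\mathbf{B}^H)$ against $\det\mathbf{D}$.

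Next I would establish two facts about the Schur complement $\mathbf{S}:=\mathbf{D}-\mathbf{B}\mathbf{A}^{-1}\mathbf{B}^H$. First, since $\mathbf{P}$ is positive definite, so is $\mathbf{S}$ (a standard consequence of the factorization above, as the middle factor must be positive definite). Second, the correction term $\mathbf{C}:=\mathbf{B}\mathbf{A}^{-1}\mathbf{B}^H$ is positive semidefinite because $\mathbf{A}^{-1}$ is positive definite; moreover, \emph{here is where the full-rank hypothesis enters}: if $\mathbf{B}$ has full rank, then for any nonzero vector $\mathbf{v}$ the vector $\mathbf{B}^H\mathbf{v}$ is nonzero, so $\mathbf{v}^H\mathbf{C}\mathbf{v}=(\mathbf{B}^H\mathbf{v})^H\mathbf{A}^{-1}(\mathbf{B}^H\mathbf{v})>0$, making $\mathbf{C}$ strictly positive definite. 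Thus $\mathbf{S}=\mathbf{D}-\mathbf{C}$ is a positive definite matrix obtained from the positive definite $\mathbf{D}$ by subtracting a positive definite matrix.

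The final step is to conclude $\det\mathbf{S}<\det\mathbf{D}$ from the fact that $\mathbf{0}\prec\mathbf{S}\prec\mathbf{D}$ (where $\prec$ denotes the Loewner order). The cleanest route is the monotonicity of the determinant on positive definite matrices under the strict Loewner order: if $\mathbf{0}\prec\mathbf{S}\prec\mathbf{D}$ then $\det\mathbf{S}<\det\mathbf{D}$. I would prove this by the substitution $\mathbf{D}=\mathbf{S}^{1/2}(\mathbf{I}+\mathbf{S}^{-1/2}\mathbf{C}\mathbf{S}^{-1/2})\mathbf{S}^{1/2}$, so that $\det\mathbf{D}=(\det\mathbf{S})\det(\mathbf{I}+\mathbf{M})$ with $\mathbf{M}:=\mathbf{S}^{-1/2}\mathbf{C}\mathbf{S}^{-1/2}$ positive definite; since every eigenvalue $\mu_k>0$, we get $\det(\mathbf{I}+\mathbf{M})=\prod_k(1+\mu_k)>1$, yielding $\det\mathbf{D}>\det\mathbf{S}=\det\mathbf{P}/\det\mathbf{A}$, which is exactly the claim.

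The main obstacle is not any single step but making sure the full-rank hypothesis is used in precisely the right place: the non-strict Fischer inequality already gives $\det\mathbf{P}\le(\det\mathbf{A})(\det\mathbf{D})$ from $\mathbf{C}\succeq\mathbf{0}$, and the whole content of the strict version is that full rank of $\mathbf{B}$ upgrades $\mathbf{C}\succeq\mathbf{0}$ to $\mathbf{C}\succ\mathbf{0}$, which in turn forces at least one eigenvalue $\mu_k$ to be strictly positive and breaks the equality. I should be careful that $\mathbf{B}$ need not be square, so "full rank" means rank equal to the number of rows of $\mathbf{B}$ (equivalently the size of $\mathbf{D}$); this is exactly the condition that guarantees $\mathbf{B}^H\mathbf{v}\neq\mathbf{0}$ for all nonzero $\mathbf{v}$ in the domain of $\mathbf{D}$, and hence the strict positivity of $\mathbf{C}$.
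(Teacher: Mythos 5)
Your proof is correct and follows essentially the same route as the paper's: both reduce to the Schur-complement identity $\det\mathbf{P}=(\det\mathbf{A})\det(\mathbf{D}-\mathbf{B}\mathbf{A}^{-1}\mathbf{B}^H)$, observe that full rank of $\mathbf{B}$ makes $\mathbf{B}\mathbf{A}^{-1}\mathbf{B}^H$ strictly positive definite, and invoke strict monotonicity of the determinant under the Loewner order. The only difference is that you explicitly derive the block factorization and the determinant monotonicity (via $\det(\mathbf{I}+\mathbf{M})>1$) where the paper cites them from Horn and Johnson, and you correctly pin down that ``full rank'' must mean full row rank of $\mathbf{B}$.
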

\begin{proof}
   Let $\mathbf{R}\succ0$ denote positive definiteness, and
   $\mathbf{R}\succ \mathbf{S}$ defined by $(\mathbf{R}-\mathbf{S})\succ 0$.
   Then \cite[7.7.6]{HornJohnson}
   $\mathbf{P}\succ0 \Leftrightarrow
    (\mathbf{A}\succ0, \mathbf{D}\succ \mathbf{BA}^{-1}\mathbf{B}^H)$.
   Thus for arbitrary $\mathbf{B}$ holds
   $\mathbf{D}-(\mathbf{D}-\mathbf{BA}^{-1}\mathbf{B}^H)=\mathbf{BA}^{-1}\mathbf{B}^H\succeq0$
   and becomes strict if $\mathbf{B}$ has full rank. Since
   $\big(0\prec \mathbf{S}\prec \mathbf{R} \Rightarrow \det \mathbf{S}< \det \mathbf{R}\big)$
   we obtain
   $\det \mathbf{P}=(\det\mathbf{A})(\det[\mathbf{D}-\mathbf{BA}^{-1}\mathbf{B}^H])
   < (\det \mathbf{A})(\det \mathbf{D})$, if $\mathbf{B}$ has full rank.
\end{proof}
From $\det\mathbf{B}=|\alpha_1|^2+|\alpha_2|^2$ it follows, that
apart from the set of events $\{\alpha_1=\alpha_2=0\}$ of measure zero,
$\mathbf{B}$ has full rank, thus the upper bound for $I_{sA}$ is strict with probability
one.

\subsubsection{Case of arbitrary $n_R$} The available portion of the mutual information achievable with $n_R\geq 1$ for
the stacked Alamouti scheme is
\begin{equation}\label{eq:exactMutInfoStackAlam}
I_{sA} = \frac{1}{2}\log_2\det\left(\mathbf{I}_{n_T} +\frac{\rho}{n_T}(\mathbf{H}')^H\mathbf{H}'\right)\;.
\end{equation}
Following the derivation above for arbitrary $n_R$ results in
\begin{equation}\label{eq:UbbBoundStAl}
    I_{sA}\leq I_{sA}^{ub}=\frac{L_1}{2}\log_2 \left(1+\frac{2\rho}{n_TL_1}||\mathbf{H}||^2\right),
\end{equation}
where $L_1=\min(n_T,2n_R)$. By averaging~(\ref{eq:UbbBoundStAl}) over all channel realizations, an upper bound on the
average rate $R_{sA}^{ub}\geq \mathbb{E}[I_{sA}]$ of the stacked Alamouti scheme similar to~\eqref{eq:UppBoundAnaly}
may be obtained
\begin{align}\label{eq:UppBoundExactNRBel}
    R_{sA} \leq R_{sA}^{ub} =\mathbb{E}\left[I_{ub}\right] = & \frac{L_1}{2\ln(2)}\sum\limits_{k=0}^{n_Tn_R-1}
    \left(\frac{n_TL_1}{2\rho}\right)^{n_Tn_R-k-1}\\
    & e^{\frac{n_TL_1}{2\rho}} \Gamma\left(1-(n_Tn_R-k),\frac{n_TL_1}{2\rho}\right)\nonumber,
\end{align}
which can be approximated using $\log_2(1+x)\approx \log_2(x)$ for $x\gg 1$ by
\begin{equation}\nonumber
    R_{sA}^{ub}\approx
    \frac{L_1}{2}\log_2\left(\frac{2\rho}{n_TL_1}\right)+\frac{L_1}{2\ln(2)}\left(\sum\limits_{p=1}^{n_Tn_R-1}\frac{1}{p}-\gamma\right).
\end{equation}
Note that the approximation gets better for higher SNR and may be inaccurate for low SNR.  Further note that, for high
SNR, the slope of the upper bound~\eqref{eq:UppBoundExactNRBel} and its approximation is equal to $\nicefrac{L_1}{2}$.

\subsection{Lower bounds on the ergodic capacity and the average rate of stacked OSTBC}\label{sec:MutInfoLB}

Similarly to the last subsection, here we derive lower bounds for the ergodic capacity and the average rate of the
stacked OSTBC. Due to the peculiar property of stacked OSTBC, lower bounds are obtained in the procedure for the
following cases: (i) $n_T\leq n_R$, (ii) $n_R < n_T < 2n_R$, (iii) $2n_R\leq n_T \leq 4n_R$,  and (iv) $4n_R < n_T$.

First of all, from~\cite{OymanNBPaulraj} we obtain the following lower bound on the ergodic capacity
\begin{equation}\label{eq:BoundOyman}
    C \geq C_{lb}= \sum_{j=1}^{L}\log_2\left(1+\frac{\rho}{n_T}\exp\left(\sum_{p=1}^{K-j}\frac{1}{p}-\gamma\right)\right),
\end{equation}
where $\gamma\approx 0.57721566$ is Euler's constant.

In order to derive an upper bound on the ratio of the ergodic capacity to the average rate achieved with the stacked
scheme, we need a lower bound for the average rate of the stacked scheme. To this end, we
rewrite~\eqref{eq:exactMutInfoStackAlam} as follows
\begin{align}\label{eq:MutInfoStAlReW}
I_{sA} = \frac{1}{2}\log_2\det\left(\mathbf{I}_{n_T}
+\frac{\rho}{n_T}(\mathbf{H})^H\mathbf{H}+\frac{\rho}{n_T}(\mathbf{H}'_e)^H\mathbf{H}'_e\right),
\end{align}
where $\mathbf{H}$ is the actual MIMO channel, which is obtained by taking the odd rows of the equivalent channel
$\mathbf{H}'$ and $\mathbf{H}_e$ is obtained by taking the even rows of $\mathbf{H}'$. The relation between the actual
channel $\mathbf{H}$ and $\mathbf{H}_e$ is described in the following proposition.
\begin{proposition}\label{prop:RelHandHe} Let $\mathbf{H}_e$ be the even and $\mathbf{H}$ the odd rows of $\mathbf{H}'$ given
in~\eqref{eq:system_H_vorne}, respectively. Then the following holds
\begin{enumerate}
    \item $\nonumber
    \mathbf{H}_e=\mathbf{H}^*\mathbf{J}
$, 
where\footnote{Notation: $\mathbf{A}^T$, $\mathbf{A}^H$, $\mathbf{A}^*$ means transpose, hermitian
  transpose, and complex conjugation, respectively}
\begin{equation}\nonumber
\mathbf{J}=\mathbf{I}_{\frac{n_T}{2}}\otimes\left[%
\begin{array}{cc}
  0 & 1 \\
  -1 & 0 \\
\end{array}%
\right].
\end{equation}
    \item     $
    \mathbb{E}\left[\mathbf{H}\mathbf{H}_e^H\right]=\mathbb{E}\left[\mathbf{H}\mathbf{J}^T\mathbf{H}^T\right]=\mathbf{0}
    $.
\end{enumerate}
\end{proposition}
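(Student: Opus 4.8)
The plan is to dispatch the two claims in sequence; both reduce to a structural identity followed by an elementary second-moment computation, so neither is genuinely hard.

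For the first claim I would argue blockwise. By construction $\mathbf{H}'$ is assembled from the $2\times 2$ blocks $\mathbf{H}_{i,j}'$ appearing in~\eqref{eq:neuer_Kanal}, whose odd (first) row is $[\,h_{ij},\,h_{i(j+1)}\,]$ and whose even (second) row is $[\,-h_{i(j+1)}^*,\,h_{ij}^*\,]$. Collecting the odd rows over all $i$ and all pairs $j$ therefore reproduces exactly the original channel $\mathbf{H}$, while the even rows assemble into $\mathbf{H}_e$. It then suffices to check the identity on a single block, namely
\begin{equation*}
[\,-h_{i(j+1)}^*,\,h_{ij}^*\,]
= [\,h_{ij}^*,\,h_{i(j+1)}^*\,]\begin{bmatrix} 0 & 1 \\ -1 & 0 \end{bmatrix},
\end{equation*}
i.e. that right-multiplying the conjugated odd block by the $2\times 2$ symplectic factor yields the even block. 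Since $\mathbf{J}=\mathbf{I}_{n_T/2}\otimes\left[\begin{smallmatrix} 0 & 1 \\ -1 & 0\end{smallmatrix}\right]$ is precisely the block-diagonal assembly of these factors over the $n_T/2$ antenna pairs, stacking the per-block identity across all blocks and all receive indices gives $\mathbf{H}_e=\mathbf{H}^*\mathbf{J}$. This part is routine; the only care needed is tracking the complex conjugation.

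For the second claim I would first establish the left-hand equality as a deterministic identity, prior to any averaging. Using the first claim together with the fact that $\mathbf{J}$ is real, $\mathbf{H}_e^H=(\mathbf{H}^*\mathbf{J})^H=\mathbf{J}^H\mathbf{H}^T=\mathbf{J}^T\mathbf{H}^T$, hence $\mathbf{H}\mathbf{H}_e^H=\mathbf{H}\mathbf{J}^T\mathbf{H}^T$ holds identically. It remains to show $\mathbb{E}[\mathbf{H}\mathbf{J}^T\mathbf{H}^T]=\mathbf{0}$. The essential point is that this matrix is built from the conjugation-\emph{free} products $h_{ia}h_{kb}$ (note $\mathbf{H}^T$, not $\mathbf{H}^H$), so its $(i,k)$ entry is $\sum_{a,b}(\mathbf{J}^T)_{ab}\,\mathbb{E}[h_{ia}h_{kb}]$. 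Because each $h_{ia}$ is circularly symmetric with independent $\mathcal{N}(0,1/2)$ real and imaginary parts, one has $\mathbb{E}[h_{ia}^2]=0$, and distinct entries are independent with zero mean; thus $\mathbb{E}[h_{ia}h_{kb}]=0$ for every index quadruple, and every entry of the expectation vanishes.

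The main obstacle is conceptual rather than computational: one must resist reading $\mathbf{H}\mathbf{J}^T\mathbf{H}^T$ as a nonzero covariance-type quantity. The vanishing rests on the pseudo-covariance $\mathbb{E}[h^2]=0$ of the circularly symmetric model, not on any property of $\mathbf{J}$. As a consistency check, note that $\mathbf{J}^T=-\mathbf{J}$ forces $\mathbf{H}\mathbf{J}^T\mathbf{H}^T$ to be skew-symmetric, so its diagonal is already zero deterministically; the independence of the rows belonging to distinct receive antennas then annihilates the off-diagonal expectations, furnishing an alternative route to the same conclusion that does invoke the antisymmetry of $\mathbf{J}$.
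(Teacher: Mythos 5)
The paper does not actually supply a proof of this proposition --- it is dismissed as ``straightforward and uninformative and thus it is omitted'' --- so there is no official argument to compare against; your proposal fills that gap correctly and completely. The blockwise check that right-multiplying $[\,h_{ij}^*,\,h_{i(j+1)}^*\,]$ by the $2\times 2$ factor of $\mathbf{J}$ reproduces the even row $[\,-h_{i(j+1)}^*,\,h_{ij}^*\,]$ establishes part 1, and part 2 then follows exactly as you say: $\mathbf{H}\mathbf{H}_e^H=\mathbf{H}\mathbf{J}^T\mathbf{H}^T$ is a deterministic identity, and its expectation vanishes entrywise because every product $\mathbb{E}[h_{ia}h_{kb}]$ is zero --- by independence when $(i,a)\neq(k,b)$ and by the vanishing pseudo-covariance $\mathbb{E}[h^2]=0$ of the circularly symmetric entries otherwise. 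Your observation that the skew-symmetry $\mathbf{J}^T=-\mathbf{J}$ already kills the diagonal deterministically is a nice consistency check but not needed for the conclusion.
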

\begin{proof}
The proof is straightforward and uninformative and thus it is omitted.
\end{proof}
 Eq.~\eqref{eq:MutInfoStAlReW} can be rewritten as
\begin{align}\nonumber
I_{sA} & =\frac{1}{2}\log_2\Bigg(\det\left(\mathbf{I}_{n_T}
+\frac{\rho}{n_T}(\mathbf{H})^H\mathbf{H}\right) \times\\
&\det\left(\mathbf{I}_{n_T}+\frac{\rho}{n_T}\mathbf{H}'_e\left(\mathbf{I}_{n_T}
+\frac{\rho}{n_T}(\mathbf{H})^H\mathbf{H}\right)^{-1}(\mathbf{H}'_e)^H\right) \Bigg) \nonumber\\
&= \frac{1}{2}\log_2\det\left(\mathbf{I}_{n_T} +\frac{\rho}{n_T}(\mathbf{H})^H\mathbf{H}\right) +\frac{1}{2}\;\times \nonumber \\
 &  \log_2\det\Big(\mathbf{I}_{n_T} + \frac{\rho}{n_T}\mathbf{H}'_e\left(\mathbf{I}_{n_T}
+\frac{\rho}{n_T}(\mathbf{H})^H\mathbf{H}\right)^{-1}(\mathbf{H}'_e)^H\Big)\;.\nonumber
\end{align}
Since $\mathbf{H}'_e\left(\mathbf{I}_{n_T} +\frac{\rho}{n_T}(\mathbf{H})^H\mathbf{H}\right)^{-1}(\mathbf{H}'_e)^H$ is a
positive semidefinite matrix, it follows immediately that the rate achieved with the stacked Alamouti is lower bounded
by
\begin{align}\nonumber
I_{sA} \geq \frac{1}{2}\log_2\det\left(\mathbf{I}_{n_T} +\frac{\rho}{n_T}(\mathbf{H})^H\mathbf{H}\right),
\end{align}
which is half the capacity of a MIMO system with $n_T$ transmit and $n_R$ receive antennas.

Another lower bound is obtained for the case $n_T\leq n_R $ by applying Minkowski's determinant
inequality~\cite[p.482]{HornJohnson}
 ($\det(\mathbf{A}+\mathbf{B}) \geq
  (\det(\mathbf{A})^{\frac{1}{n}}+\det(\mathbf{B})^{\frac{1}{n}})^{n}$,
  $\mathbf{A}\succ0, \mathbf{B}\succeq0$)
to~\eqref{eq:exactMutInfoStackAlam}
\begin{align}
R_{sA}& =\mathbb{E}\left[\frac{1}{2}\log_2\det\left(\mathbf{I}+\frac{\rho}{n_T}(\mathbf{H}')^H\mathbf{H}'\right)\right]
\nonumber \\ & \geq
\frac{n_T}{2}\mathbb{E}\left[\log_2\left(1+\rho\det\left(\frac{1}{n_T}(\mathbf{H}')^H\mathbf{H}'\right)^{\frac{1}{n_T}}\right)\right]\nonumber
\\
 & =\frac{n_T}{2}\mathbb{E}\left[\log_2\left(1+\rho\det\left(\frac{1}{n_T}(\mathbf{H}^H \mathbf{H}+
 \mathbf{H}_e^H\mathbf{H}_e)\right)^{\frac{1}{n_T}}\right)
 \right].\nonumber
 \end{align}
Applying again Minkowski's determinant inequality results in
 \begin{align}
 R_{sA} \geq
\frac{n_T}{2}\mathbb{E}\Bigg[\log_2\Big(1+\rho\det\left(\frac{1}{n_T}\mathbf{H}^H\mathbf{H}\right) ^{1/n_T} \nonumber \\
+\rho\det\left(\frac{1}{n_T}\mathbf{H}_e^H\mathbf{H}_e\right)^{\frac{1}{n_T}}\Big)\Bigg].\nonumber
\end{align}
Since $\mathbf{H}_e$ is obtained simply by conjugating and exchanging some elements of the actual matrix $\mathbf{H}$,
it can be shown that the eigenvalues of $(\mathbf{H}_e)^H(\mathbf{H}_e)$ are the same as the eigenvalues of
$\mathbf{H}^H(\mathbf{H})$. Therefore,  the lower bound is equal to
\begin{align}
 R_{sA}\geq
\frac{n_T}{2}\mathbb{E}\left[\log_2\left(1+\rho\exp\ln\left(2\det\left(\frac{1}{n_T}\mathbf{H}^H\mathbf{H}\right)^{\frac{1}{n_T}}
\right)\right)\right].\nonumber
\end{align}
Since $\log_2(1+ce^x)$ is a convex function in $x$ for $c>0$ and by applying Jensen's inequality it holds that
$\mathrm{E}\left[\log_2(1+ce^x)\right]\geq \log_2(1+c\exp(\mathrm{E}\left[x\right]))$, we have
\begin{align}
R_{sA}& \geq
\frac{n_T}{2}\log_2\left(1+\rho\exp\mathbb{E}\left[\ln\left(2\det\left(\frac{1}{n_T}\mathbf{H}^H\mathbf{H}\right)^{\frac{1}{n_T}}
\right)\right]\right) \nonumber\\
& = \frac{n_T}{2}\log_2\left(1+\rho
2\exp\frac{1}{n_T}\mathbb{E}\left[\ln\left(\det\left(\frac{1}{n_T}\mathbf{H}^H\mathbf{H}\right)
\right)\right]\right).\nonumber
 \end{align}

 From~\cite{OymanNBPaulraj,Goodman}, we know that
\begin{equation}\nonumber
\mathbb{E}\left[\ln\left(\det\left(\frac{1}{n_T}\mathbf{H}^H\mathbf{H}\right)
\right)\right]=\sum_{j=1}^{n_T}\mathbb{E}\left[\ln X_j\right]-n_T\ln n_T,
\end{equation}
where the $X_j$ are independent, $\chi^2$ distributed independent variables with $ 2(n_R-j+1) $ degrees of freedom.
Using this yields
\begin{align}\nonumber
R_{sA} & \geq \frac{n_T}{2}\log_2\left(1+\frac{\rho}{\frac{n_T}{2}}
\exp\left(\frac{1}{n_T}\sum_{j=1}^{n_T}\mathbb{E}\left[\ln X_j\right]\right)\right).
 \end{align}
With
\begin{equation}\nonumber
\mathbb{E}\left[\ln X_j\right]=\psi(n_R-j+1),
\end{equation}
where $\psi(\cdot)$ is the digamma function, which may be rewritten for integer arguments as follows
\begin{equation}\nonumber
\psi(x)=-\gamma+\sum_{p=1}^{x-1}\frac{1}{p}.
\end{equation}
Using this results in the following lower bound for the average rate of the stacked scheme.
\begin{align}\nonumber
    R_{sA}\geq & \frac{n_T}{2}\log_2\left(1+\frac{\rho}{\frac{n_T}{2}}\exp
    \left(\frac{1}{n_T}\sum_{j=1}^{n_T}\sum_{p=1}^{n_R-j}\frac{1}{p}-\gamma\right)\right)\\
& \quad[\text{case }n_T \leq n_R]. \nonumber
\end{align}

Similar steps can be pursued for $n_T\geq 4 n_R$ resulting in the following lower bound
\begin{align}\nonumber
    R_{sA}\geq &
    n_R\log_2\left(1+\frac{2\rho}{n_T}\exp\left(\frac{1}{2n_R}\sum_{j=1}^{2n_R}
    \sum_{p=1}^{\nicefrac{n_T}{2}-j}\frac{1}{p}-\gamma\right)\right)\\
& \quad[\text{case }n_T > 4n_R] \nonumber
\end{align}

For the case of $n_T\geq 2n_R$ we rewrite \eqref{eq:exactMutInfoStackAlam} as
\begin{align}\label{eq:BlockStrukSOSTBCRate}
I_{sA} =\frac{1}{2}\log_2\det\left(\mathbf{I}_{2n_R} +\frac{\rho}{n_T}\left[%
\begin{array}{cc}
  \mathbf{H}\mathbf{H}^H & \mathbf{H}\mathbf{H}_e^H \\
  \mathbf{H}_e\mathbf{H}^H & \mathbf{H}_e\mathbf{H}_e^H \\
\end{array}%
\right]\right).
\end{align}
Since $\mathbb{E}\left[\mathbf{H}\mathbf{H}_e^H\right]=\mathbf{0}$ from proposition~\ref{prop:RelHandHe}, we may
proceed as in~\cite{FoschiniGans98} to arrive at a lower bound given as
\begin{align}\nonumber
I_{sA}\geq \frac{1}{2}\sum_{k=1}^{L_1}\log_2\left(1+\frac{\rho}{n_T}X_{k}\right),
\end{align}
where $X_{k}$ are again independent, $\chi^2$ distributed independent variables with $ 2(K_1-k+1) $ degrees of freedom
with $K_1=\max(2n_R,n_T)$. By following the same line of arguments as in~\cite{OymanNBPaulraj}, we arrive at
\begin{align}\nonumber
    R_{sA} \geq R_{sA}^{lb}=  & \frac{1}{2}\sum_{j=1}^{L_1}\log_2\left(1+\frac{\rho}{n_T}
    \exp\left(\sum_{p=1}^{K_1-j}\frac{1}{p}-\gamma\right)\right)\\
& \quad[\text{case }n_T \ge 2n_R] \nonumber
\end{align}
In~\cite{LiYeCapStackedAlamou}, a similar (however, looser) lower bound was derived for this case in order to analyze
the asymptotic performance (with respect to $\rho$) of stacked OSTBC.

For the case of $n_R<n_T<2n_R$ we have
\begin{align}
R_{sA}&
=\mathbb{E}\left[\frac{1}{2}\log_2\det\left(\mathbf{I}+\frac{\rho}{n_T}(\mathbf{H}')^H\mathbf{H}'\right)\right]\nonumber
\\ & =\mathbb{E}\left[\frac{1}{2}\log_2\det\left(\mathbf{I}+\frac{\rho}{n_T}\left(\mathbf{H} \mathbf{H}^H+
 \mathbf{H}_e\mathbf{H}_e^H\right)\right)\right]\nonumber \\
& =
 \mathbb{E}\left[\frac{1}{2}\log_2\det\left(\frac{1}{2}\mathbf{I}+\frac{\rho}{n_T}\mathbf{H} \mathbf{H}^H+
 \frac{1}{2}\mathbf{I}+\frac{\rho}{n_T}\mathbf{H}_e\mathbf{H}_e^H\right)\right]. \nonumber
\end{align}
Applying now Minkowski's determinant inequality results in
\begin{align}
R_{sA}\geq \frac{1}{2}\mathbb{E}\left[\log_2\det\left(\mathbf{I}+\frac{2\rho}{n_T}\mathbf{H}\mathbf{H}^H\right)\right]
\end{align}
and finally
\begin{align}\nonumber
    R_{sA} \geq R_{sA}^{lb}= & \frac{1}{2}\sum_{j=1}^{L}\log_2\left(1+\frac{2\rho}{n_T}
    \exp\left(\sum_{p=1}^{K-j}\frac{1}{p}-\gamma\right)\right) \\
& \quad[\text{case } n_R< n_T < 2n_R]. \nonumber
\end{align}

The lower bound results derived in this subsection are summarized in Table~\ref{tab:LowBounds} on the top of the next
page.
\begin{table*}
\begin{center}
\begin{tabular}{|c|c|}
  \hline
  Case &  Lower bound on $R_{sA}$\\
  \hline
  $n_T \leq n_R$ &  $\frac{n_T}{2}\log_2\left(1+\frac{\rho}{\frac{n_T}{2}}\exp
    \left(\frac{1}{n_T}\sum_{j=1}^{n_T}\sum_{p=1}^{n_R-j}\frac{1}{p}-\gamma\right)\right)$\\
  $n_R < n_T < 2n_R$ & $\frac{1}{2}\sum_{j=1}^{L}\log_2\left(1+\frac{2\rho}{n_T}
    \exp\left(\sum_{p=1}^{K-j}\frac{1}{p}-\gamma\right)\right)$ \\
  $2n_R \leq n_T $ &  $\frac{1}{2}\sum_{j=1}^{L_1}\log_2\left(1+\frac{\rho}{n_T}
    \exp\left(\sum_{p=1}^{K_1-j}\frac{1}{p}-\gamma\right)\right)$\\
  $4n_R < n_T$ &  $n_R\log_2\left(1+\frac{2\rho}{n_T}\exp\left(\frac{1}{2n_R}\sum_{j=1}^{2n_R}
    \sum_{p=1}^{\nicefrac{n_T}{2}-j}\frac{1}{p}-\gamma\right)\right)$\\
  \hline
\end{tabular}
\caption{Lower bound on $R_{sA}$ for the different cases} \label{tab:LowBounds}
\end{center}
\end{table*}

Note that for high SNR, most of the bounds have a slope equal to $\nicefrac{L_1}{2}$, which equals the slope of the
upper bound~\eqref{eq:UppBoundExactNRBel}. Only for the case $n_R < n_T < 2n_R$, the slope of the lower bound is equal
to $\nicefrac{L}{2}$. In Fig.~\ref{fig:RsAUpLowBounds} on the top of the next page, the average rate, the upper
bound~\eqref{eq:UppBoundExactNRBel} and the lower bounds from Table~\ref{tab:LowBounds} for $n_T=$ and $n_R=1,\dots,4$
are depicted. From the Fig., we observe that the upper bound in~\eqref{eq:UppBoundExactNRBel} and lower bounds track
the average rate quite well. Only in the aforementioned case $n_R < n_T < 2n_R$, the slope of the lower bound differs
from the exact performance and the upper bound. Note that for $n_R=1$, the upper bound coincides with the exact
performance.

\begin{figure*}[htb]
\begin{center}
\subfigure[$n_R=1$]{\includegraphics[scale=0.4]{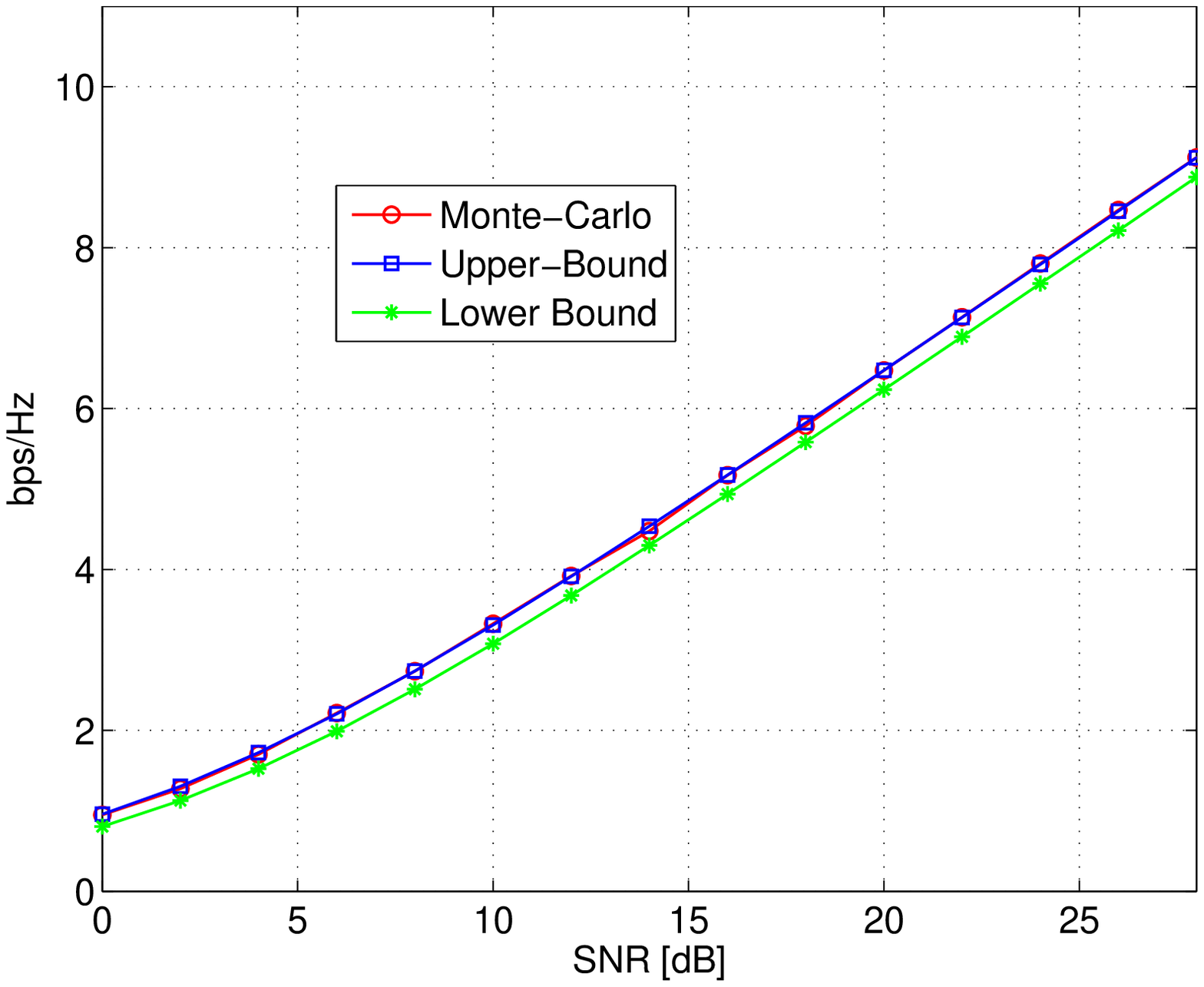}}
\subfigure[$n_R=2$]{\includegraphics[scale=0.4]{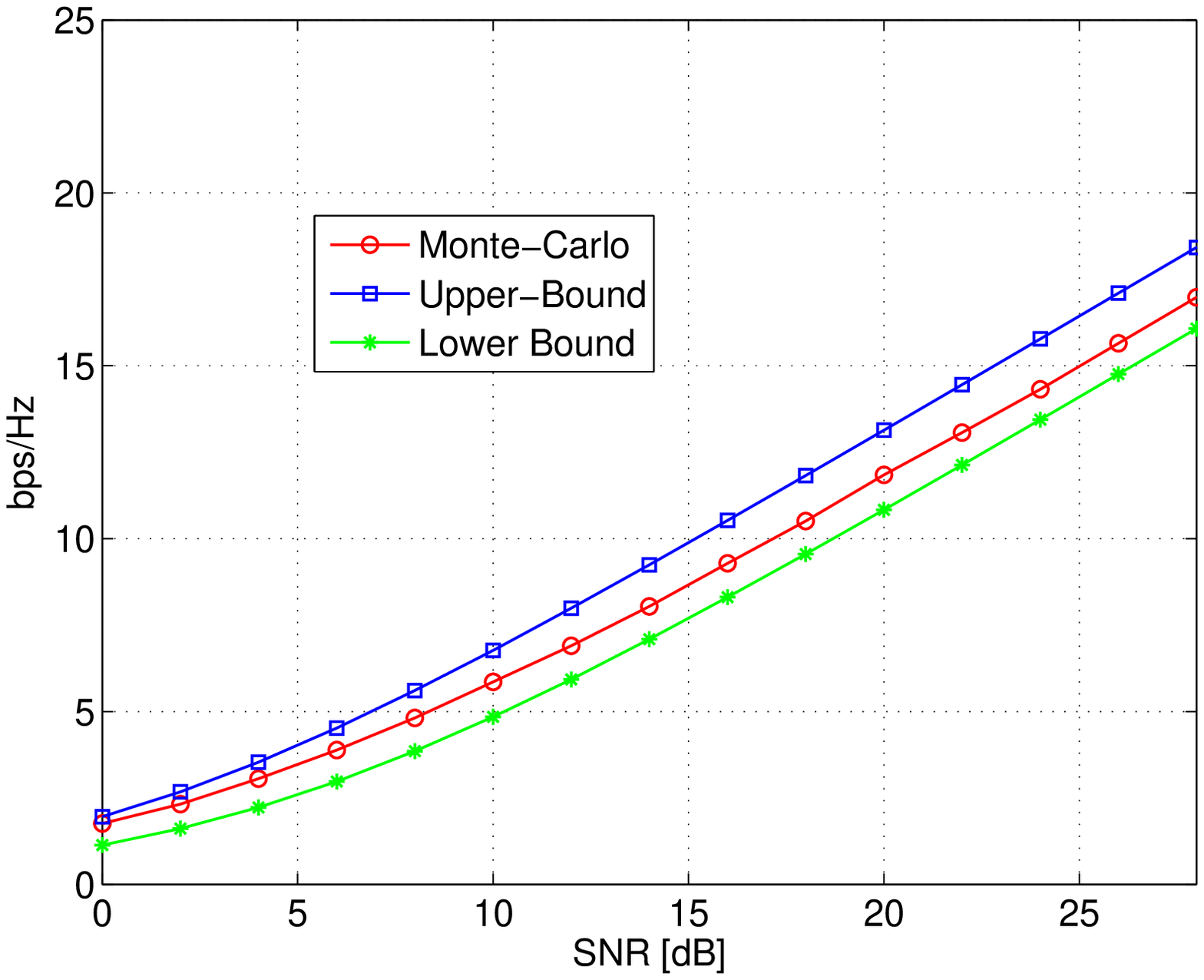}}
\subfigure[$n_R=3$]{\includegraphics[scale=0.4]{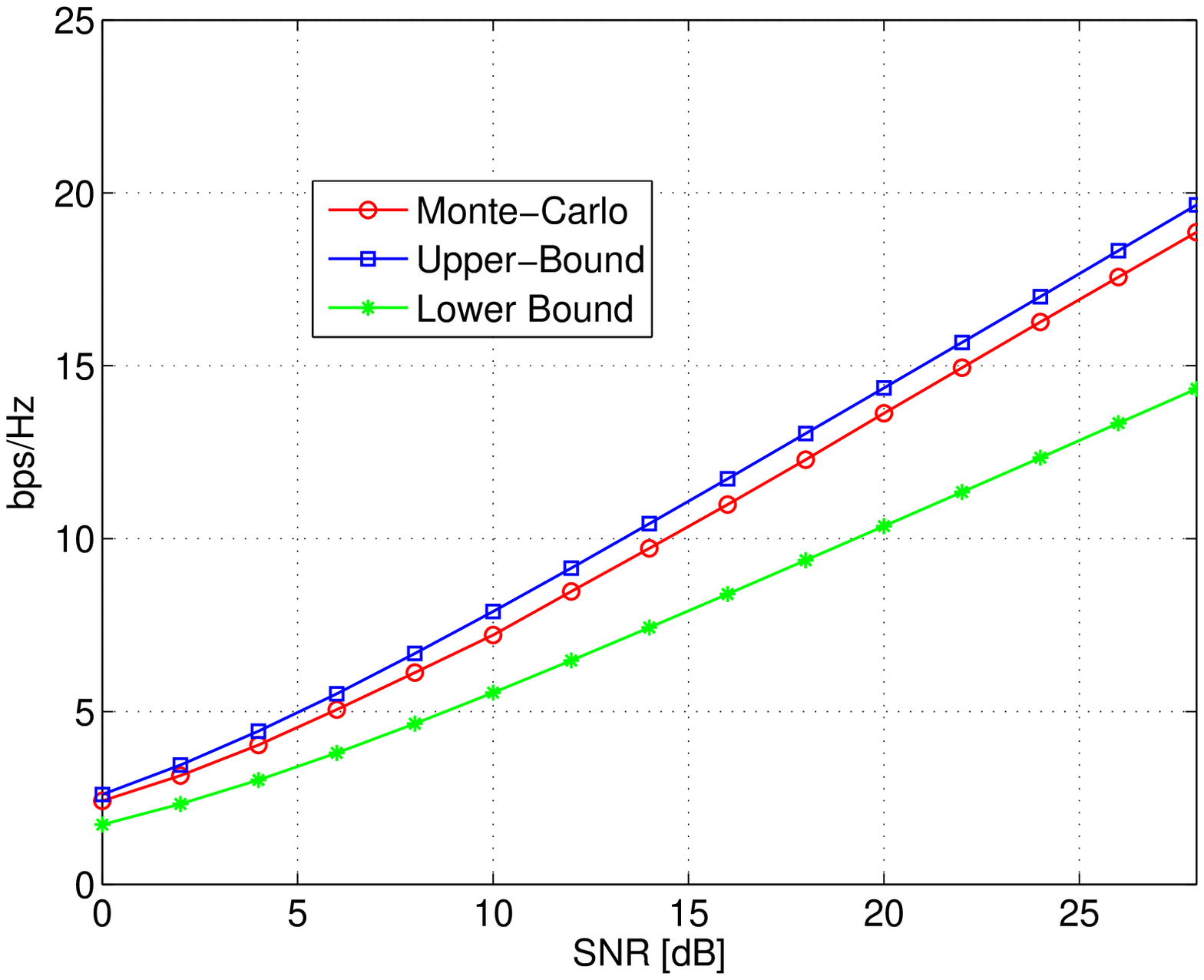}}
\subfigure[$n_R=4$]{\includegraphics[scale=0.4]{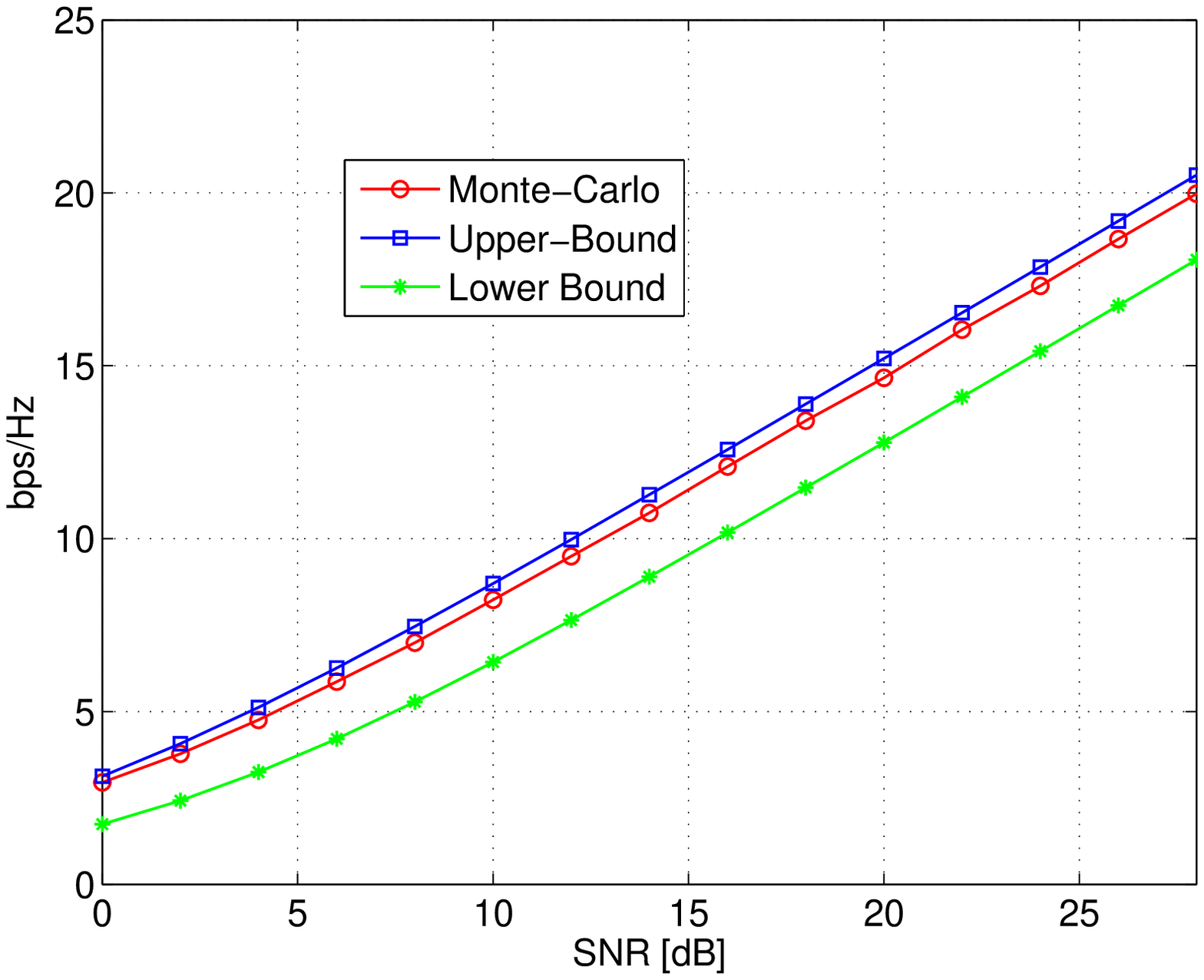}}
\caption{Average rates, upper bounds, and lower bounds of the stacked OSTBCs for $n_T=4$.} \label{fig:RsAUpLowBounds}
\end{center}
\hrule
\end{figure*}

\subsection{Characterization of the absolute rate loss} In this subsection, we characterize the absolute rate loss of the
stacked OSTBC to the ergodic capacity using Fischer's inequality. First of all, we discuss the case of $n_T\geq 2n_R$.
Note that the rate loss with the basic Alamouti scheme ($n_T=2$) was also analyzed in~\cite{Sandhu,HassibiHoch2002}
using different approaches. Starting from~\eqref{eq:BlockStrukSOSTBCRate}, applying Fischer's inequality and averaging
over all channel realizations we arrive at
\begin{align}
R_{sA}  \leq \mathbb{E}\Bigg[ & \frac{1}{2}\log_2\Bigg(\det\left(\mathbf{I}_{n_R}
+\frac{\rho}{n_T}\mathbf{H}\mathbf{H}^H\right) \nonumber \\
& \times \left(\mathbf{I}_{n_R} +\frac{\rho}{n_T}\mathbf{H}_e\mathbf{H}_e^H\right)\Bigg)\Bigg] = C(\rho, n_T,n_R) \label{eq:AbsRatLossNTNR} \\
& \qquad[\text{case } n_T \geq 2n_R] \nonumber,
\end{align}
i.e. as long as $n_T\geq 2n_R$, the average rate of the stacked OSTBC is only upper bounded by the ergodic capacity.
Proceeding similarly for the case $n_T < 2n_R$ results in
\begin{align}
R_{sA}  & =\frac{1}{2}\mathbb{E}\left[\log_2\det\left(\mathbf{I}_{n_T} +\frac{\rho}{n_T}\left[%
\begin{array}{cc}
  \mathbf{\tilde{H}}^H\mathbf{\tilde{H}} & \mathbf{\tilde{H}}^H\mathbf{\tilde{H}}_e \\
  \mathbf{\tilde{H}}_e^H\mathbf{\tilde{H}} & \mathbf{\tilde{H}}_e^H\mathbf{\tilde{H}}_e \\
\end{array}%
\right]\right)\right] \nonumber\\
& \leq \mathbb{E}\left[\log_2\left(\det\left(\mathbf{I}_{\frac{n_T}{2}}
+\frac{\rho}{n_T}\mathbf{\tilde{H}}^H\mathbf{\tilde{H}}\right)\right)\right]\nonumber \\
&=C\left(\frac{\rho}{2},\frac{n_T}{2},2n_R\right)<C\left(\rho,n_T,n_R\right) \;\;\;[\text{case } n_T <
2n_R],\label{eq:AbsRatLossNT22NR}
\end{align}
where $\mathbf{\tilde{H}}$ is obtained by taking the odd columns of the equivalent channel $\mathbf{H}'$ and
$\mathbf{\tilde{H}}_e$ is obtained by taking the even columns of $\mathbf{H}'$. From~\eqref{eq:AbsRatLossNT22NR}, we
observe that for $n_T < 2n_R$ the average rate of the stacked OSTBC is upper bounded by the ergodic capacity of a
system with $\frac{n_T}{2}$ transmit and $2n_R$ receive antennas with a power penalty of $3$~dB.

We can characterize the gap in~\eqref{eq:AbsRatLossNTNR} and~\eqref{eq:AbsRatLossNT22NR} due to the application of
Fischer's inequality. For $n_T < 2n_R$, we have then
\begin{align}
\Delta & = C\left(\frac{\rho}{2},\frac{n_T}{2},2n_R\right)-R_{sA}  \nonumber\\
& =\frac{1}{2}
\mathbb{E}\left[\log_2\left(\frac{\det\left(\mathbf{I}_{n_T}
+\frac{\rho}{n_T}\mathbf{W}_D\right)}{\det\left(\mathbf{I}_{n_T} +\frac{\rho}{n_T}\left[
  \begin{array}{cc}
     \mathbf{\tilde{H}}^H\mathbf{\tilde{H}} & \mathbf{\tilde{H}}^H\mathbf{\tilde{H}}_e \\
     \mathbf{\tilde{H}}_e^H\mathbf{\tilde{H}} & \mathbf{\tilde{H}}_e^H\mathbf{\tilde{H}}_e \\
  \end{array}%
    \right]\right)}\right)\right] \nonumber,
\end{align}
where
\begin{align}\nonumber
\mathbf{W}_D=\left[%
\begin{array}{cc}
  \mathbf{\tilde{H}}^H\mathbf{\tilde{H}} & \mathbf{0} \\
  \mathbf{0} & \mathbf{\tilde{H}}_e^H\mathbf{\tilde{H}}_e \\
\end{array}%
\right].
\end{align}
Since the events of $\mathbf{\tilde{H}}_e^H\mathbf{\tilde{H}}$ having not full rank are of measure zero the strict form
of Fischer's inequality stated in Lemma \ref{lem.strictFischer} shows, that the gap in~\eqref{eq:AbsRatLossNTNR}
and~\eqref{eq:AbsRatLossNT22NR} is non zero in general, i.e. $\Delta>0$, thus it is not possible to reach the upper
capacity bounds.

With
\begin{align}\nonumber
\mathbf{W}_{\mathrm{Off}}=
\left[
  \begin{array}{cc}
     \mathbf{\tilde{H}}^H\mathbf{\tilde{H}} & \mathbf{\tilde{H}}^H\mathbf{\tilde{H}}_e \\
     \mathbf{\tilde{H}}_e^H\mathbf{\tilde{H}} & \mathbf{\tilde{H}}_e^H\mathbf{\tilde{H}}_e \\
  \end{array}
\right] - \mathbf{W}_D,
\end{align}
we can rewrite
\begin{align}
& \det\left(\mathbf{I}_{n_T} + \frac{\rho}{n_T} \left[
  \begin{array}{cc}
     \mathbf{\tilde{H}}^H\mathbf{\tilde{H}} & \mathbf{\tilde{H}}^H\mathbf{\tilde{H}}_e \\
     \mathbf{\tilde{H}}_e^H\mathbf{\tilde{H}} & \mathbf{\tilde{H}}_e^H\mathbf{\tilde{H}}_e \\
  \end{array}
\right]\right) \nonumber \\
& = \det\left(\mathbf{I}_{n_T}
+\frac{\rho}{n_T}(\mathbf{W}_{\mathrm{Off}}+\mathbf{W}_{D})\right) \nonumber \\
& =\det\left(\mathbf{I}_{n_T} +\frac{\rho}{n_T}\mathbf{W}_{D}\right)\times \nonumber\\
& \det\left(\mathbf{I}_{n_T} +\left[\mathbf{I}_{n_T}
+\frac{\rho}{n_T}\mathbf{W}_{D}\right]^{-1}\frac{\rho}{n_T}\mathbf{W}_{\mathrm{Off}}\right)\nonumber
\end{align}
to arrive at
\begin{align}\nonumber
\Delta=-\frac{1}{2} \mathbb{E}\Bigg[\log_2\det\Bigg(\mathbf{I}_{n_T} +\underbrace{\left[\mathbf{I}_{n_T}
+\frac{\rho}{n_T}\mathbf{W}_{D}\right]^{-1}\frac{\rho}{n_T}\mathbf{W}_{\mathrm{Off}}}_{\mathbf{A}}\Bigg)\Bigg].
\end{align}
Using
\begin{align}\nonumber
\det(\mathbf{I}_{n_T}+\mathbf{A})=\exp\left(\sum_{k=1}^{L_1}\ln(1+\mu_{k})\right)
\end{align}
yields
\begin{align}\nonumber
\Delta \leq \frac{1}{2\ln(2)}\mathbb{E}\left[\sum\limits_{k=1}^{L_1}\mu_{k}^2\right]
\end{align}
where the inequality follows from Taylor series expansion $x-\frac{1}{2}x^2\leq \ln(1+x)$ around $x=0$ and the fact
that $\mathrm{tr}(\mathbf{A})=0$, since $\mathbf{A}$ has zero block matrices on its diagonal. Its off-diagonal blocks
have the form $\mathbf{B}=\left[\mathbf{I}_{n_T}
  +\frac{\rho}{n_T}\mathbf{\tilde{H}}^H\mathbf{\tilde{H}}\right]^{-1}
  \frac{\rho}{n_T}\mathbf{\tilde{H}}^H\mathbf{\tilde{H}_e}$ and
$\mathbf{B}_e=\left[\mathbf{I}_{n_T}
  +\frac{\rho}{n_T}\mathbf{\tilde{H}_e}^H\mathbf{\tilde{H}_e}\right]^{-1}
  \frac{\rho}{n_T}\mathbf{\tilde{H_e}}^H\mathbf{\tilde{H}}$,
respectively. Note that the matrices in brackets have the same eigenvalues. This implies that each eigenvalue of
$\mathbf{A}$ appears twice, i.e. $\mu_{k}=\mu_{k+\nicefrac{L_1}{2}}$, $1 \leq k \leq \nicefrac{L_1}{2}$. Additionally
applying the inequality \cite[(5.6.Ex.26)]{HornJohnson} $\mathrm{tr}(\mathbf{A}^2)\le ||\mathbf{A}||^2$ we obtain
$\sum\limits_{k=1}^{L_1}\mu_{k}^2
 =2\sum\limits_{k=1}^{L_1/2}\mu_{k}^2
 \le 2 ||\mathbf{B}||_F^2$.
Further we have
\begin{equation*}
||\mathbf{B}||_F^2=\mathrm{tr}\left\{
  \left(\frac{\rho}{n_T}\right)^2 \mathbf{\tilde{H_e}}\mathbf{\tilde{H_e}}^H
  \mathbf{\tilde{H}}
    \left[\mathbf{I}_{n_T}
      +\frac{\rho}{n_T}\mathbf{\tilde{H}}^H\mathbf{\tilde{H}}\right]^{-2}
  \mathbf{\tilde{H}}^H
\right\}
\end{equation*}
which can be interpreted as the trace of a product of two positive semi definite matrices $\mathbf{P}$, $\mathbf{Q}$.
Using the fact, that $\mathbf{\tilde{H_e}}\mathbf{\tilde{H_e}}^H$ has the same ordered eigenvalues as
$\mathbf{\tilde{H}}\mathbf{\tilde{H}}^H$ and the inequality $\mathrm{tr}(\mathbf{PQ})\le\sum_j
\mu_k(\mathbf{P})\mu_k(\mathbf{Q})$ \cite{MarshallOlkin} yields $\sum\limits_{k=1}^{L_1}\mu_{k}^2\le L_1$ and we arrive
at the final bound
\begin{align}\nonumber
\Delta \leq \frac{1}{\ln(2)}\mathbb{E}\left[
  \sum\limits_{k=1}^{\nicefrac{L_1}{2}}\mu_{k}^2\right]
  \le \frac{L_1}{2 \ln(2) }.
\end{align}

In addition to that, we have the loss
between $C\left(\frac{\rho}{2},\frac{n_T}{2},2n_R\right)$ and $C\left(\rho,n_T,n_R\right)$.
Approximating~\eqref{eq:BoundGrant} and~\eqref{eq:BoundOyman} for high SNR as
\begin{align}
& C_{\mathrm{Jen}} \left(\rho,n_T,n_R\right)= \log_2\left(1+ \sum_{i=1}^L
\binom{L}{i}\frac{K!}{(K-i)!}\left(\frac{\rho}{n_T}\right)^i\right) \nonumber\\
 \approx & \log_2\left(1+ \frac{K!}{(K-L)!}
\left(\frac{\rho}{n_T}\right)^L\right) \nonumber\\
 = & \log_2\left( \frac{(K-L)!}{K!}+ \left(\frac{\rho}{n_T}\right)^L\right)+\log_2\left(\frac{K!}{(K-L)!} \right) \nonumber\\
 \approx  & \log_2\left( 1+ \left(\frac{\rho}{n_T}\right)^L\right) \approx   \log_2\left( \left( 1+
\frac{\rho}{n_T}\right)^L\right) \nonumber \\
= & L\log_2\left(1+\frac{\rho}{n_T}\right)\label{eq:ApproxCapNTNR}
\end{align}
and
\begin{align}
& C\left(\frac{\rho}{2},\frac{n_T}{2},2n_R\right) \nonumber \\
 \overset{(a)}{\geq} &
\frac{n_T}{2} \log_2\left(1+\frac{\rho}{n_T}\exp\left(\frac{2}{n_T} \sum_{j=1}^{\frac{n_T}{2}}\sum_{p=1}^{2n_R-j}\frac{1}{p}-\gamma\right)\right) \nonumber\\
 = & \frac{n_T}{2} \log_2\left(\exp\left(-\frac{2}{n_T}
\sum_{j=1}^{\frac{n_T}{2}}\sum_{p=1}^{2n_R-j}\frac{1}{p}+\gamma\right)+\frac{\rho}{n_T}\right) \nonumber \\
& + \frac{n_T}{2\ln(2)} \left(\frac{2}{n_T}
\sum_{j=1}^{\frac{n_T}{2}}\sum_{p=1}^{2n_R-j}\frac{1}{p}-\gamma\right) \nonumber \\
\approx & \left(\frac{n_T}{2}\right)\log_2\left(1+\frac{\rho}{n_T}\right),\label{eq:ApproxCapNT22NR}
\end{align}
where $(a)$ follows from applying Jensen's inequality to~\eqref{eq:BoundOyman}. With~\eqref{eq:ApproxCapNTNR}
and~\eqref{eq:ApproxCapNT22NR}, the loss between $C\left(\frac{\rho}{2},\frac{n_T}{2},2n_R\right)$ and
$C\left(\rho,n_T,n_R\right)$ is quite accurately described by
\begin{align}\nonumber
& C\left(\rho,n_T,n_R\right)-C\left(\frac{\rho}{2},\frac{n_T}{2},2n_R\right) \\
& \approx
 \left(L-\frac{n_T}{2}\right)\log_2\left(1+\frac{\rho}{n_T}\right), \quad[\text{case }n_T < 2n_R] \nonumber.
\end{align}
Finally, the absolute loss for $n_T < 2n_R$ between the ergodic capacity of a MIMO system and the stacked scheme is
given by
\begin{align}\nonumber
& \left(L-\frac{n_T}{2}\right)\log_2\left(1+\frac{\rho}{n_T}\right) \leq C\left(\rho,n_T,n_R\right)-R_{sA} \\
& \leq \frac{n_T}{2 \ln(2) } + \left(L-\frac{n_T}{2}\right)\log_2\left(1+\frac{\rho}{n_T}\right) \nonumber.
\end{align}

The same procedure can be pursued for $n_T\geq 2n_R$ resulting in the following general characterization for any
$n_T,n_R$
\begin{align}
\max\left(0,L-\frac{n_T}{2}\right)\log_2\left(1+\frac{\rho}{n_T}\right) \leq C\left(\rho,n_T,n_R\right) &-R_{sA} \nonumber \\
\leq \frac{L_1}{2 \ln(2) }  + \max\left(0,L-\frac{n_T}{2}\right)\log_2\left(1+\frac{\rho}{n_T}\right),\nonumber
\end{align}
which is equal to
\begin{align}
& \left(L-\frac{L_1}{2}\right)\log_2\left(1+\frac{\rho}{n_T}\right) \leq C\left(\rho,n_T,n_R\right) &-R_{sA} \nonumber
\\
& \leq \frac{L_1}{2 \ln(2) }  +
\left(L-\frac{L_1}{2}\right)\log_2\left(1+\frac{\rho}{n_T}\right).\label{eq:AbsLossAllg}
\end{align}
From~\eqref{eq:AbsLossAllg}, we observe that as long as $n_T \geq 2n_R$, the absolute loss is only a constant, which
depends only on the number of receive antennas. In case $n_T < 2n_R$ the absolute loss increases linearly with
$\left(L-\frac{n_T}{2}\right)$.
\section{Suboptimal detection and condition number}\label{sec:Schemes}
In the previous sections, we have shown that the stacked OSTBC achieves significant portions of the ergodic capacity.
This does not, however, guarantee good performance in terms of error probability, which will be investigated in this
section. Note that in the analysis in the previous sections it was implicitly assumed, that an optimal
maximum-likelihood detector is used at the receiver, which performs an exhaustive search over all possible transmit
symbols at each detection step. Especially for higher number of transmit antennas, this becomes computationally
prohibitive. If additionally high rates are requested, then higher order modulation sizes are necessary which increases
the computational complexity even more. Thus, suboptimal detection schemes have to be employed reducing the detection
complexity and thereby achieving reasonable error rate performance results. Therefore, in this section the impact of
the suboptimal LR-aided linear ZF-detector on the performance of the stacked OSTBC is analyzed and compared to SM and
QSTBC by resorting the equivalent channel representation. In order to apply the LR algorithm, the system model has to
rewritten, which is done in the following subsections for the different transmission schemes. Afterwards, the LR-aided
linear ZF-detection is described briefly.
\subsection{Spatial Multiplexing (SM)} For SM, the transmit matrix $\mathbf{G}_{n_T}$ is reduced to
$\mathbf{x}$, since $T=1$. In order to apply the suboptimal LR for SM, the system model in (\ref{eq:System}) has to be
rewritten as a real model~\cite{WindpassingerLLL} of the form
\begin{equation}\nonumber
 \mathbf{y}_E = \left[%
\begin{array}{cc}
  \Re\{\mathbf{x}\} \\
  \Im\{\mathbf{x}\} \\
\end{array}%
\right]^T\mathbf{H}^{SM}_E+  \mathbf{n}_E\;,
\end{equation}
where
\begin{equation}\nonumber
 \mathbf{y}_E=\left[%
\begin{array}{c}
  \Re\{\mathbf{y}\} \\
  \Im\{\mathbf{y}\} \\
\end{array}%
\right]^T\;,\mathbf{n}_E=\left[%
\begin{array}{c}
  \Re\{\mathbf{n}\} \\
  \Im\{\mathbf{n}\} \\
\end{array}%
\right]^T\;,
\end{equation}
and
\begin{equation}\nonumber
\mathbf{H}^{SM}_E=\left[%
\begin{array}{rr}
  \Re\{\mathbf{H}\} & \Im\{\mathbf{H}\} \\
  -\Im\{\mathbf{H}\} & \Re\{\mathbf{H}\} \\
\end{array}%
\right]\;.
\end{equation}
In the following, we refer to $\mathbf{H}^{SM}_E$ as the equivalent channel for the SM scheme.

\subsection{QSTBC}

Without loss of generality, in this subsection we shortly describe the QSTBC for $n_T=4$ transmit
antennas~\cite{SharmaPapadias02}. To generalization to higher number of transmit antennas is
straightforward~\cite{A.Sezgin2004}.  The transmit matrix for $n_T=4$ transmit antennas is then
given~\cite{SharmaPapadias02,A.Sezgin2004}.
\begin{eqnarray}
\mathbf{G}_{4}({\mathbf{x}}) = \left[
\begin{array}{*{4}{r}}
          x_1 & x_2 & x_3 & x_4 \\
        x_2^* & -x_1^* & x_4^* & -x_3^* \\
        x_3 & -x_4 & -x_1 & x_2 \\
       x_4^* & x_3^* & -x_2^* & -x_1^* \\
        \end{array}\nonumber
\right]\;.
\end{eqnarray}
After rewriting (\ref{eq:System}), we arrive at (similar to the proposed scheme, (cf.~\eqref{eq:system_H_vorne})
\begin{equation}\label{eq:EQ_SysModQSTBC}
    \mathbf{y}^Q=\mathbf{H}^Q\mathbf{x}+ \mathbf{n}^Q\;,
\end{equation}
where $\mathbf{H}^Q=[(\mathbf{H}_1^Q)^T,\dots,(\mathbf{H}_i^Q)^T,\dots,(\mathbf{H}_{n_R}^Q)^T]^T$ and
$(\mathbf{H}_i^Q)$ is given as
\begin{equation}
\mathbf{H}_i^Q=\left[%
\begin{array}{rrrr}
   h_{1i}   &  h_{2i}   &   h_{3i}   &  h_{4i}   \\
  -h_{2i}^* &  h_{1i}^* &  -h_{4i}^* &  h_{3i}^* \\
  -h_{3i}   &  h_{4i}   &   h_{1i}   & -h_{2i}   \\
  -h_{4i}^* & -h_{3i}^* &   h_{2i}^* &  h_{1i}^* \\
\end{array}%
\right] \nonumber \;.
\end{equation}
 For general $n_T$, we have to rewrite the system model in~\eqref{eq:EQ_SysModQSTBC} as a real model similar to SM.
 For $n_T=4$, however, it is not necessary to resort to the real system model. Here, the system model can be decomposed
 such that the iterative optimal algorithm in~\cite{YaoWornell} for a system
with $n_T=2$ transmit antennas can be applied. For this we first perform channel-matched filtering as the first stage
and noise pre-whitening as the second stage of preprocessing at the receiver resulting in two independent
subsystems~\cite{SharmaPapadias03}, one of which
\begin{equation}\nonumber
    \tilde{\mathbf{y}}_o = \underbrace{\left[%
\begin{array}{rr}
  \beta    &     \jmath\beta   \\
  \epsilon & -\jmath\epsilon \\
\end{array}%
\right]}_{\mathbf{H}_E^{Q}} \left[\begin{array}{c}
      x_1 \\
      x_3 \\
    \end{array}\right] + \tilde{\mathbf{n}}_o \;,
\end{equation}
is only a function of the elements of $\mathbf{x}$ with odd index, and the other one is only a function of the elements
of $\mathbf{x}$ with even index,
\begin{equation}\nonumber
    \tilde{\mathbf{y}}_e = \underbrace{\left[%
\begin{array}{rr}
  \beta    &     \jmath\beta   \\
  \epsilon & -\jmath\epsilon \\
\end{array}%
\right]}_{\mathbf{H}_E^{Q}} \left[\begin{array}{c}
      x_4 \\
      x_2 \\
    \end{array}\right] + \tilde{\mathbf{n}}_e \;,
\end{equation}
where $\mathbf{H}_E^{Q}$ is the $2\times 2$ equivalent channel for QSTBC, $\beta = \sqrt{\frac{\lambda+\alpha}{2}}$,
$\epsilon = \sqrt{\frac{\lambda-\alpha}{2}}$, $\lambda = \sum_{i=1}^{n_R}\sum_{j=1}^{n_T} |h_{i,j}|^2$, and $\alpha =
\sum_{i=1}^{n_R}2\mathrm{Im}(h_{i,1}^*h_{i,3}+h_{i,4}^*h_{i,2})$. Both subsystems can now be detected separately, which
reduces the complexity of the receiver significantly.
\begin{lemma}
In order to get the best performance with respect to error rates and a decoupled system with scalar input and scalar
output as in the case of OSTBC, the columns of $\mathbf{H}_E^{Q}$ have to be orthogonal. However, the probability that
this occurs for $\mathbf{H}_E^{Q}$ is zero.
\end{lemma}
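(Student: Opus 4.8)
The plan is to split the statement into its two halves: that orthogonality of the columns is exactly the condition for an OSTBC-like decoupled system, and that this orthogonality occurs with probability zero. For the first half I would compute the Hermitian inner product of the two columns of $\mathbf{H}_E^Q$. Writing $\mathbf{c}_1=[\beta,\epsilon]^T$ and $\mathbf{c}_2=[\jmath\beta,-\jmath\epsilon]^T$ and using that $\beta$ and $\epsilon$ are real by construction, one obtains $\mathbf{c}_1^H\mathbf{c}_2=\jmath\beta^2-\jmath\epsilon^2=\jmath(\beta^2-\epsilon^2)=\jmath\alpha$, since $\beta^2-\epsilon^2=\tfrac{\lambda+\alpha}{2}-\tfrac{\lambda-\alpha}{2}=\alpha$. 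Hence the columns are orthogonal if and only if $\alpha=0$.

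To connect this to error-rate optimality I would note that both columns share the common norm $\|\mathbf{c}_1\|=\|\mathbf{c}_2\|=\sqrt{\beta^2+\epsilon^2}=\sqrt{\lambda}$, so the Gram matrix $(\mathbf{H}_E^Q)^H\mathbf{H}_E^Q$ has both diagonal entries equal to $\lambda$ and off-diagonal entries $\pm\jmath\alpha$, with eigenvalues $\lambda\pm|\alpha|$ and hence condition number $\sqrt{(\lambda+|\alpha|)/(\lambda-|\alpha|)}$ (note that $|\alpha|\le\lambda$ always holds, so these quantities are well defined). This condition number attains its minimum value of one precisely when $\alpha=0$, in which case the Gram matrix reduces to $\lambda\mathbf{I}_2$, the two symbols no longer interfere, and the $2\times2$ subsystem decouples into two independent scalar channels exactly as for a genuine orthogonal design, eliminating the ZF noise enhancement and rendering ML detection separable. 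This is the sense in which orthogonality, equivalently $\alpha=0$, yields the best error-rate performance.

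It remains to show $\Pr(\alpha=0)=0$. Substituting $h_{ij}=\Re\{h_{ij}\}+\jmath\,\Im\{h_{ij}\}$ into $\alpha=\sum_{i=1}^{n_R}2\,\mathrm{Im}(h_{i,1}^*h_{i,3}+h_{i,4}^*h_{i,2})$ exhibits $\alpha$ as a nontrivial real quadratic polynomial in the real and imaginary parts of the channel gains; it is, for instance, nonzero for the realization with $\Re\{h_{i,1}\}=\Im\{h_{i,3}\}=1$ and all remaining entries zero. The zero set of a nontrivial polynomial is a proper algebraic variety and therefore has Lebesgue measure zero, and since the channel entries possess a jointly absolutely continuous (complex Gaussian) density, the probability assigned to this set vanishes. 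This gives $\Pr(\alpha=0)=0$ and hence zero probability that the columns of $\mathbf{H}_E^Q$ are orthogonal.

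I expect the only genuine subtlety to lie in the first half, namely making precise in what sense orthogonality is optimal, whereas the measure-zero conclusion is routine once $\alpha$ is recognized as a nontrivial polynomial in the channel parameters. The care needed is therefore mostly conceptual, identifying both a decoupled scalar system and best performance with the equal-eigenvalue, unit-condition-number Gram matrix, equivalently $\alpha=0$, rather than computational.
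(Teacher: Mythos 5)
Your proof is correct and follows essentially the same route as the paper's: compute the Hermitian inner product of the two columns of $\mathbf{H}_E^{Q}$ to see that orthogonality is equivalent to $\alpha=0$, then argue that $\alpha=0$ is a probability-zero event for i.i.d.\ complex Gaussian channel gains. You are in fact more careful than the paper on both halves --- the paper merely asserts that $P_r(\alpha=0)=0$ without the nontrivial-polynomial/measure-zero justification, and it does not spell out the Gram-matrix/condition-number argument for why $\alpha=0$ is the optimality condition --- so your additions strengthen rather than alter the argument.
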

\begin{proof}
For orthogonality, it follows from the scalar product of the columns of $\mathbf{H}_E^{Q}$ that $\alpha$ has to be
zero. But since the channel entries $\{hji\}$ are mutually independent and identically distributed (i.i.d.) random
complex Gaussian processes, the probability $P_r(\alpha=0)$ is equal to the probability
$P_r(\sum_{i=1}^{n_R}2\mathrm{Im}(h_{i,1}^*h_{i,3}+h_{i,4}^*h_{i,2})=0)$, which in turn is zero. From this it follows
that orthogonality and therefore a decoupled system can not be achieved.
\end{proof}
A disadvantage of this QSTBC is that in order to achieve the same transmission rate as SM, we have to compensate the
rate loss by using a considerably higher constellation. But recall that higher constellations complicates
amplification, synchronization, and detection. E.g., a transmission rate of 4 bits/sec/Hz for a system with $n_T=4$
transmit antennas is achieved by SM with BPSK, whereas 16QAM is required for the code rate one QSTBC.
In~\cite{PapadiasFoschi01,SezginHenkelAsi05} it was shown that QSTBC approach the capacity in case of $n_R=1$, which is
achieved in case of the stacked OSTBC as shown in section~\ref{sec:MutInfoUB}. For $n_R>1$, the performance of QSTBC in
terms of mutual information degrades severely in contrast to the stacked OSTBC, which achieve at least half of the
capacity as derived in section~\ref{sec:MutInfoLB}.

\subsection{Proposed scheme}
Given~\eqref{eq:system_H_vorne}, the equivalent real signal model for the proposed stacked OSTBC is given as
\begin{equation}\nonumber
    \mathbf{y}'' = \mathbf{H}_E^{OS}\left[%
\begin{array}{c}
  \Re\{\mathbf{x}\} \\
  \Im\{\mathbf{x}\} \\
\end{array}%
\right]+  \mathbf{n}''\;,
\end{equation}
where
\begin{equation}\nonumber
\mathbf{H}_E^{OS}=\left[%
\begin{array}{rr}
  \Re\{\mathbf{H}'\} & -\Im\{\mathbf{H}'\} \\
  \Im\{\mathbf{H}'\} &  \Re\{\mathbf{H}'\} \\
\end{array}%
\right]\;.
\end{equation}

\subsection{LR-aided linear ZF Detection}
By applying the algorithm, the $m\times n$ equivalent channel $\mathbf{H}_E$ for each transmission scheme can be
decomposed as
\begin{equation}\label{eq:H_E_decomp}
\mathbf{H}_E=\mathbf{Q}\mathbf{R}\;,
\end{equation}
where $\mathbf{R}$ is a $n\times n$  matrix with integer entries and $\mathbf{Q}$ is a $m\times n$ matrix, which is
better conditioned than $\mathbf{H}_E$, i.e. the columns of $\mathbf{Q}$ are less correlated and shorter. A good
indication for the correlation of a matrix is the so called condition number, which is defined as the ratio of the
largest singular value of the matrix to the smallest. Using (\ref{eq:H_E_decomp}), the equivalent signal model is then
given as
\begin{equation}\nonumber
    \mathbf{y} = \mathbf{H}_E\mathbf{x}_r+  \mathbf{n}=
    \mathbf{QR}\mathbf{x}_r+  \mathbf{n}=\mathbf{Q}\mathbf{z}+  \mathbf{n}\;.
\end{equation}
Now, by multiplying $\mathbf{Q}^{-1}$ from left to $\mathbf{y}$ we arrive at
\begin{equation}\nonumber
    \mathbf{\tilde{y}} =\mathbf{z}+  \mathbf{Q}^{-1}\mathbf{n}\;,
\end{equation}
where the noise enhancement and coloring is relatively small, since $\mathbf{Q}^{-1}$ is also good conditioned. In
order to get a estimation for the transmitted symbols, the following operation has to be applied
\begin{equation}\label{eq:DetecScalShiftQuant}
    \mathbf{\hat{x}} = C\left(\mathbf{R}^{-1}\mathcal{Q}_{\mathbb{Z}^n}
    \left[\frac{1}{C}\mathbf{\tilde{y}}-\mathbf{R}\frac{1}{2}\mathbf{1}_n\right]+ \frac{1}{2}\mathbf{1}_n\right)\;,
\end{equation}
where $\mathbf{1}_n$ is a $n\times 1$ vector of ones, $C$ is a constant given as $C=\sqrt{\frac{6}{M-1}}$ and
$\mathcal{Q}_{\mathbb{Z}^n}[\cdot]$ describes the component-wise quantization with respect to the infinite integer
space $\mathbb{Z}$. However, this quantization can only be applied, if the transmit modulation signal set $\mathcal{C}$
is transformed to $\mathbb{Z}$, which is achieved with the scaling and shifting of $\mathbf{\tilde{y}}$ within the
quantization operation in~\eqref{eq:DetecScalShiftQuant}. Note that after this quantization, re-scaling and
re-shifting, some points may lie outside the constellation. A suboptimal solution is to assign these points to the
nearest point within the constellation. For BPSK, the effect of this assignment has a significant effect on the error
rate performance, however, this gain diminishes with higher order modulations.

\subsection{Condition number}\label{sec:cond_number} For illustration,
\begin{figure}[htb]
     \centering
  \includegraphics[scale=0.5]{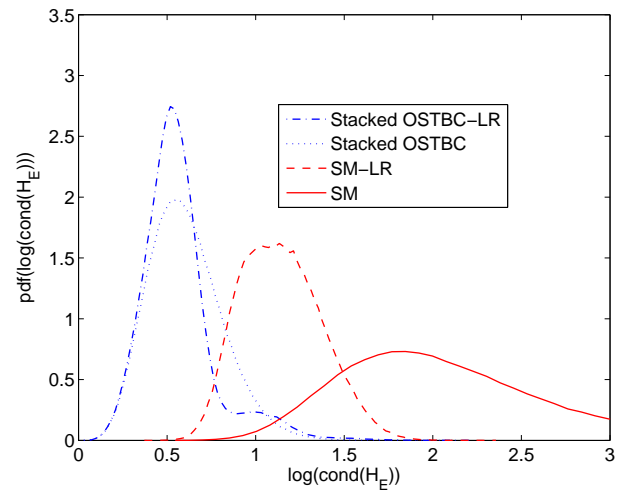}
  \caption{Pdfs of channel cond. numbers with SM or the stacked OSTBC with and w/o LR for a $4\times 4$ system.} \label{fig:Prob_Dens_Cond_NTHalf}
\end{figure}
the probability density functions (pdfs) of the natural logarithm of the condition number of the channels for the
stacked QSTBC and SM are depicted in Fig.~\ref{fig:Prob_Dens_Cond_NTHalf}.  From the Fig., we observe that the
SM-channel is bad-conditioned and that LR has a great impact on the channel. For the stacked OSTBC, we observe that the
impact of LR is not as significant as for SM.

The pdf of the natural logarithm of the condition number for the QSTBC is depicted in Fig.~\ref{fig:Prob_Dens_Cond}.
For comparison, the pdf for the stacked OSTBC is also plotted. In case of QSTBC, for some channels we have no gain with
LR, since many samples of the equivalent channel generated with QSTBC have inherently low condition numbers such that
the LR has no effect. Different from the QSTBC, for the stacked OSTBC there is a gain achieved by applying the LR for
almost all samples of the equivalent channel model. Note that for orthogonal channels (e.g., with OSTBC), the pdf is a
dirac impulse at position $0$.
\begin{figure}[htb]
     \centering
  \includegraphics[scale=0.5]{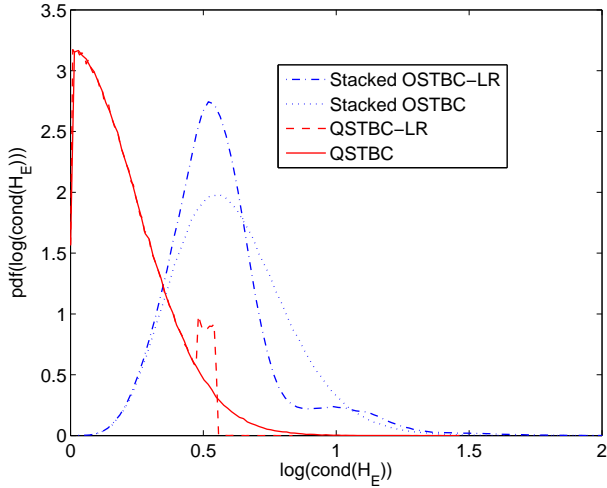}
  \caption{Pdfs of channel cond. numbers with the stacked OSTBC or  QSTBC with and w/o LR.} \label{fig:Prob_Dens_Cond}
\end{figure}

\section{Simulations}\label{seq:simulations}
In Fig.~\ref{fig:capacity_nr1}, the average rate of the stacked Alamouti scheme and the ergodic capacity of a MIMO
system  with $n_R=2$ and $n_T=2,4$ and $n_T=8$ is depicted. In case of $n_T=2$, we have the standard Alamouti scheme.
From the Fig., we observe that the difference between the average rate of the stacked Alamouti scheme and the capacity
diminishes significantly by increasing the number of transmit antennas.
\begin{figure}[htb]
     \centering
  \includegraphics[scale=0.5]{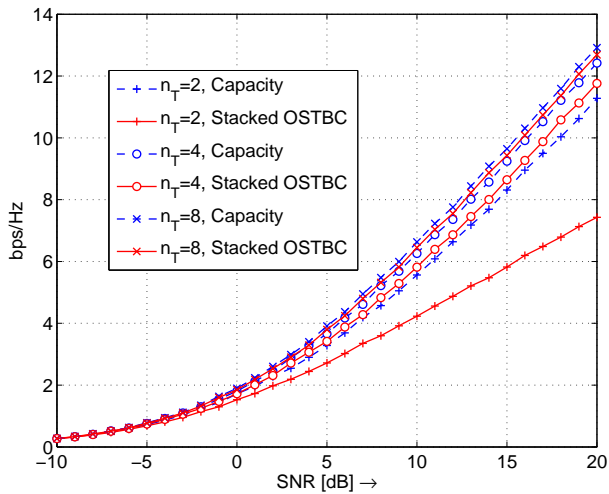}
  \caption{Ergodic capacity and average rates of the stacked OSTBC
  with $n_R=2$ receive and $n_T=2$,$n_T=4$ and $n_T=8$ transmit antennas.}
   \label{fig:capacity_nr1}
\end{figure}

In Fig.~\ref{fig:capacity_nr2}, the average rate of the stacked Alamouti scheme and the ergodic capacity with $n_T=4$
and $n_R=2,4$ and $n_T=8$ is depicted.
\begin{figure}[htb]
     \centering
  \includegraphics[scale=0.5]{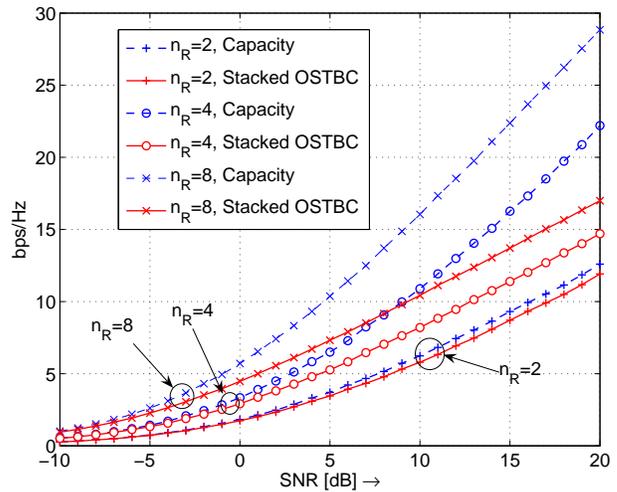}
  \caption{Ergodic capacity and average rates of the stacked OSTBC
  with $n_T=4$ transmit and $n_R=2$,$n_R=4$ and $n_R=8$ receive antennas.}
   \label{fig:capacity_nr2}
\end{figure}
In contrast to the case of increasing number of transmit antennas, here we observe that the difference between the
average rate of the stacked Alamouti scheme and the ergodic capacity increases by increasing the number of receive
antennas.

In Fig.~\ref{fig:Ratio}, the ratio $C/R_{sA}$ is depicted for $n_T=8$ transmit and $n_R=2$ (bottom) to $n_R=9$ (top)
receive antennas. For high SNR, we observe that as long as $n_T \geq 2n_R$ the ratio decreases as the SNR increases. In
case $n_T< 2n_R$ the ratio increases steadily. As derived in section~\ref{sec:MutInfoLB}, the ratio is upper bounded by
$C/R_{sA}<2$ for any $n_R$, $n_T$.

\begin{figure}[htb]
     \centering
  \includegraphics[scale=0.5]{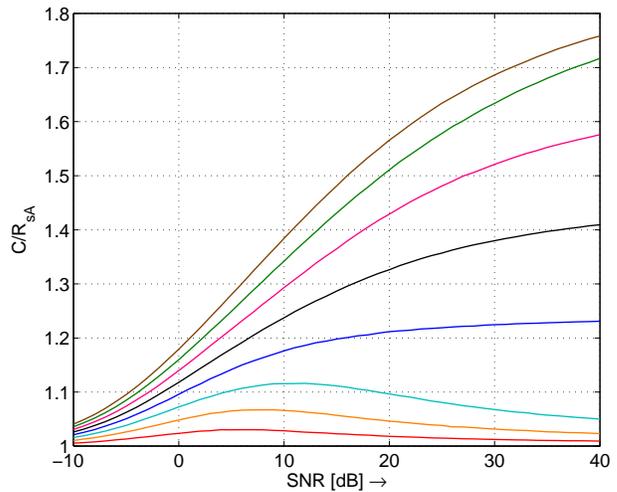}
  \caption{Ratio $C/R_{sA}$ for $n_T=8$ transmit and $n_R=2$ (bottom) to $n_R=9$ (top) receive antennas.}
   \label{fig:Ratio}
\end{figure}

In Fig.~\ref{fig:RatioAnaly}, the ratio $C/R_{sA}$ is depicted for $n_T=8$ transmit and $n_R=4$, $n_R=6$ and $n_R=9$
receive antennas. In addition to that, we used our lower and upper bounds derived in the previous section in order to
derive lower and upper bounds for the ratio $C/R_{sA}$, i.e.
\begin{align}
\frac{C_{lb}}{R_{sA}^{ub}}\leq \frac{C}{R_{sA}}\leq \frac{C_{\mathrm{Jen}}}{R_{sA}^{lb}}
\end{align}
 Based on the derivations in section~\ref{sec:MutInfoLB}, we know that the
ratio is upper bounded by $2$. Further, since the trivial lower bound is equal to $1$, we only depicted $1\leq
C/R_{sA}\leq 2$. For $n_R=9$, we observe that both the lower and upper bound are getting tighter for higher SNR. At low
SNR, the upper bound performs better than the lower bound. For $n_R=4$, $n_R=6$ and low SNR, we observe that the upper
bound is quite loose in comparison to $n_R=9$. The lower bound for $n_R=4$ is not depicted here, since it is lower than
the trivial lower bound of $1$.

\begin{figure}[htb]
     \centering
  \includegraphics[scale=0.5]{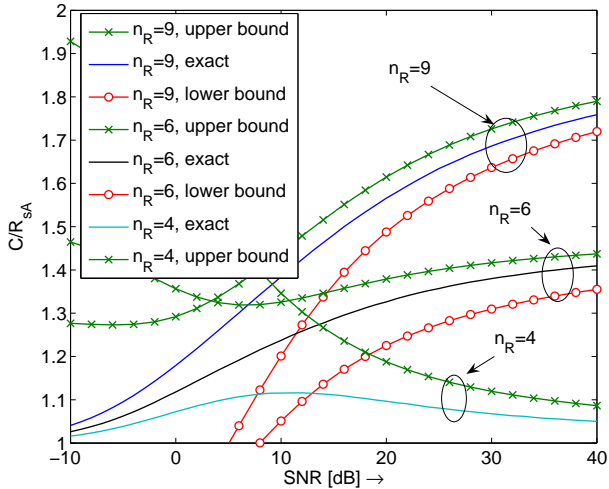}
  \caption{Ratio $C/R_{sA}$ for $n_T=8$ transmit and $n_R=4$, $n_R=6$ to $n_R=9$ receive antennas.}
   \label{fig:RatioAnaly}
\end{figure}

In Fig.~\ref{fig:AbsLoss},
\begin{figure}[htbp]
     \centering
  \includegraphics[scale=0.5]{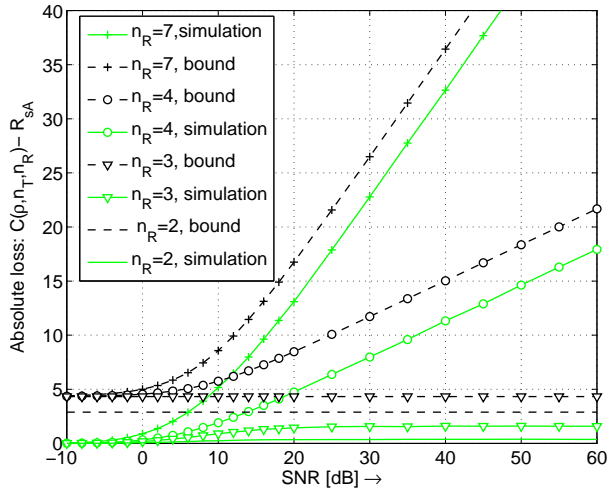}
  \caption{Absolute loss $\Delta$ for $n_T=6$ transmit and different numbers of receive antennas.} \label{fig:AbsLoss}
\end{figure}
the absolute loss $\Delta$ is depicted for $n_T=6$ transmit antennas and $n_R=2-4$ and $n_R=7$ receive antennas. From
the figure, we observe that as long as $n_T \geq 2n_R$, the slope of the absolute loss tends to a constant for high
SNR. This behavior is tracked quite well by the bound in~\eqref{eq:AbsLossAllg}, which is also depicted in the figure.

In Fig.~\ref{fig:BER_R1},
\begin{figure}[htb]
     \centering
  \includegraphics[scale=0.5]{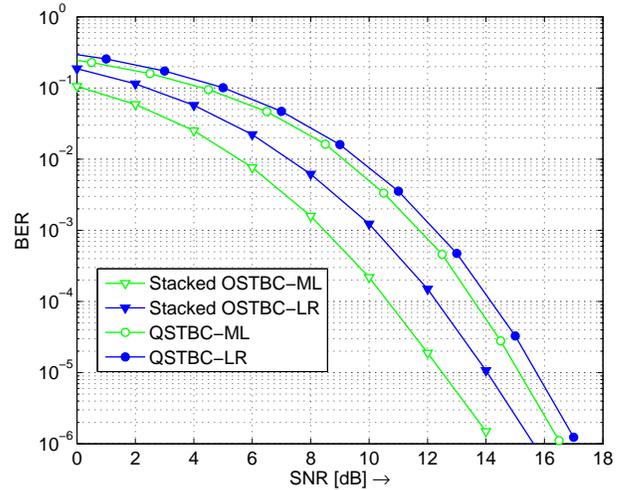}
  \caption{BER for QSTBC and the stacked OSTBC with ML and LR-ZF, 4 bit/sec/Hz.} \label{fig:BER_R1}
\end{figure}
the BER of the stacked OSTBC with QAM and the  QSTBC with 16-QAM is depicted for a transmission rate of 4 bits/sec/Hz.
Note that in order to make a fair comparison of the three transmission schemes (i.e. QSTBC, SM, and stacked OSTBC), we
analyzed a system with $n_T=n_R=4$ antennas, since for SM with suboptimal detectors it is necessary that $n_R\geq n_T$.
From the figure, we observe, that the performance of the stacked OSTBC with LR-ZF detection is comparable with the
optimal ML detection. In fact, the diversity gain of both detectors is equal and there is only a power penalty of about
$1.7$dB of LR-ZF to ML. The gap between ML and LR-ZF detection is even smaller for QSTBC. Here, the power penalty is
about $0.6$dB. Interestingly, the performance of the stacked OSTBC for both ML and LR-ZF detection is better than that
of QSTBC in the SNR region shown in the figure. However, for very high SNR and low BER, the diversity gain of $n_Tn_R$
(contrary to diversity of $2n_R$ for the stacked OSTBC) for the QSTBC will show its effect and in can be expected that
the performance of QSTBC gets better than that of the stacked OSTBC. For smaller $n_R$, this intersection point is
expected be at lower SNR values.

The bit error-rate performance of SM for BPSK and a transmission rate of 4 bits/sec/Hz is shown in
Fig.~\ref{fig:BER_RNThalf}. For comparison purposes, we also plotted the BER of the stacked QSTBC with QAM. Here, we
observe that the BER performance with ML-detection of the stacked OSTBC is better than that of SM for all SNR values.
In case of LR-ZF detection, SM performs only better than QSTBC for low SNR of about $2$dB. However, the gap in power
efficiency between ML and LR-ZF is higher for the stacked QSTBC in comparison to SM with BSPK. Note that (as
aforementioned) the small gap for SM is only due to the BPSK modulation. For higher modulation sizes, this gap is even
higher.
\begin{figure}[htb]
     \centering
  \includegraphics[scale=0.5]{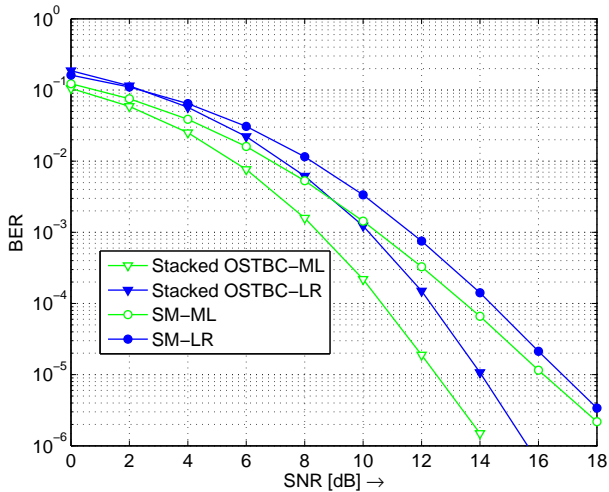}
  \caption{BER for SM and stacked OSTBC with ML and LR-ZF, 4 bit/sec/Hz.} \label{fig:BER_RNThalf}
\end{figure}
By increasing the transmission rate to $8$bit/sec/Hz, i.e. QAM for SM and 16QAM for the stacked OSTBC, we observe in
Fig.~\ref{fig:BER_RNThalf_16QAM} that the gap between ML and LR-ZF is dramatically increased in case of SM to about
$6$dB. On the other hand, the gap between ML and LR-ZF for the stacked OSTBC and 16QAM is reduced in comparison to the
gap achieved with QAM (cf. Fig.~\ref{fig:BER_RNThalf}) to about $1.3$dB. Although the performance of SM with ML
detection is better than that of the stacked OSTBC for low and moderate SNR values, for high SNR values it is the other
way around.
\begin{figure}[htb]
     \centering
  \includegraphics[scale=0.5]{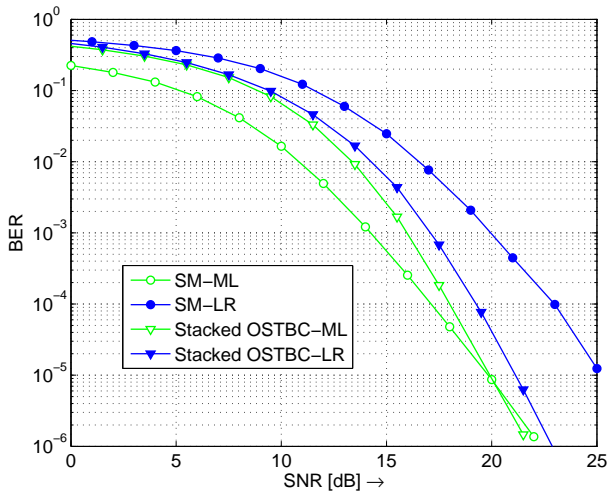}
  \caption{BER for SM and stacked OSTBC with ML and LR-ZF, 8bit/sec/Hz.} \label{fig:BER_RNThalf_16QAM}
\end{figure}
The performance of the stacked OSTBC with LR-ZF detection is better for the whole SNR range in comparison to SM, which
is of higher interest for practical applications, since the computational complexity of  the ML detector is exponential
in the transmission rate. Another disadvantage of SM is that we need at least as many receive as transmit antennas,
i.e. $n_T\leq n_R$, whereas only $\frac{n_T}{2}$ receive antennas are necessary for the stacked OSTBC. Multiple receive
antennas are only optional for the QSTBC .

\section{Conclusion}\label{sec:Conclusion}
In this paper, we analyzed the performance of stacked OSTBC in terms of the average rate. We showed, that the stacked
scheme achieves the capacity of a MIMO system in the case of $n_R=1$ receive antennas.
 Further, we showed that the MIMO capacity is at most twice the
rate achieved with the proposed scheme at any SNR. We derived lower and upper bounds for the rate achieved with this
scheme and compared it with upper and lower bounds for the capacity.

In addition to the capacity analysis, we also analyzed the error rate performance of the proposed scheme. To this end,
we combined the stacked OSTBC with a zero-forcing (ZF) detector applying lattice-reduction (LR) aided detection, since
this suboptimal detector achieves the same diversity as the optimal ML detector with only some penalty in power
efficiency. We analyzed the effect of LR on the equivalent channel generated by the stacked OSTBC, for spatial
multiplexing (SM) and QSTBC. We observed the highest gain for SM and a higher gain for the stacked OSTBC in comparison
to the QSTBC.

Finally, we illustrated the theoretical results by numerical simulations. From simulation results we observed that the
stacked scheme approaches the ergodic capacity of a MIMO system by increasing the number of transmit antennas for a
fixed number of receive antennas. Furthermore, we observed that as long as the number of transmit antennas is twice the
number of receive antennas the ratio of the capacity to the rate of the proposed scheme improves by increasing the SNR.
Regarding the simulation of the error rate performance, we observed that in the considered SNR region the stacked OSTBC
performs better in terms of BER for ML as well as for LR-aided ZF-detection than SM and QSTBC in the setup given.
Further, we observed that the gap between maximum-likelihood and LR-ZF detection is dramatically reduced in comparison
to SM schemes, especially for higher transmission rates.

\section*{Acknowledgements}
The authors thank the reviewers for their detailed and insightful comments, which significantly enhanced the quality
and readability of the paper.

\end{document}